\def\bydef{\stackrel{{\rm def}}{=}}
\newlength{\protowidth}
\newcommand{\pprotocol}[5]{
{\begin{figure}[#4]
\begin{center}
\setlength{\protowidth}{\textwidth}
\addtolength{\protowidth}{-2\intextsep}

\fbox{
        \small
        \hbox{\quad
        \begin{minipage}{\protowidth}
    \begin{center}
    {\bf #1}
    \end{center}
        #5
        \end{minipage}
        \quad}
        }
        \caption{\label{#3} #2}
\end{center}
\vspace{-4ex}
\end{figure}
} }
\newcommand{\protocol}[4]{
\pprotocol{#1}{#2}{#3}{tbh!}{#4} }
\newcommand{\signed}[2]{
	\langle #1\rangle_{#2} }
\def\bool{\{0,1\}}
\newcommand{\true}{\mathsf{true}}
\newcommand{\aba}{\mathsf{BA}}
\newcommand{\echo}{\mathsf{echo}}
\newcommand{\ready}{\mathsf{ready}}
\newcommand{\BB}{\Pi^{t_s}_{\sf BB}}
\newcommand{\ACS}{\Pi^{t_a, t_s}_{\sf ACS}}
\newcommand{\SMR}{\Pi^{t_a,t_s}_{\sf SMR}}
\newcommand{\tx}{{\sf tx}}
\newcommand{\verify}{\mathsf{Vrfy}}
\def\vrfy{\verify}
\newcommand{\buf}{\mathsf{buf}}
\newcommand{\BLA}{\Pi_{\mathsf{BLA}}}
\def\blocks{{\sf Blocks}}
\newcommand{\prp}{\Pi^{P^*}_{\mathsf{Propose}}}
\newcommand{\GC}{\Pi^{k}_{\mathsf{GC}}}
\def\epochs{{\sf Epochs}}
\newcommand{\block}[2]{\mathsf{Blocks}_{#1}[#2]}
\newcommand{\epoch}[2]{\mathsf{Epochs}_{#1}[#2]}
\newcommand{\sblock}{\mathsf{BlockSet}}
\newcommand{\leader}[1]{\mathsf{Leader}_{#1}}
\newcommand{\ignore}[1]{}
\begin{document}


\title{{\bf Network-Agnostic State Machine Replication}}
\author{Erica Blum\inst{1} \and Jonathan Katz\inst{2}
\and Julian Loss\inst{1}}

\institute{University of Maryland
\and George Mason University}
\date{}      



\maketitle
\begin{abstract}
	We study the problem of \emph{state machine replication} (SMR)---the underlying problem addressed by blockchain protocols---in the presence of a malicious adversary who can corrupt some fraction of the parties running the protocol.
	Existing protocols for this task assume either a
	\emph{synchronous} network (where all
	messages are delivered within some known
	time~$\Delta$) or an
	\emph{asynchronous} network (where messages can be delayed arbitrarily). Although protocols for the latter case give seemingly stronger guarantees, this is not the case since they (inherently) tolerate a lower fraction of corrupted parties.
	
	\smallskip
	We design an SMR protocol that is \emph{network-agnostic} in the following sense: if it is run in a synchronous network, it tolerates $t_s$ corrupted parties; if the network happens to be asynchronous it is resilient to $t_a \leq t_s$ faults.
	Our protocol achieves optimal tradeoffs between $t_s$ and~$t_a$.
\end{abstract}

\setcounter{footnote}{0}
\section{Introduction}
\emph{State machine replication} (SMR) is a fundamental problem in distributed computing~\cite{Lamport78b,Paxos,Schneider90} that
can be viewed as a generalization of
\emph{Byzantine agreement} (BA)~\cite{PSL80,LSP82}.
Roughly speaking, a BA protocol allows a set of $n$~parties to agree on a value \emph{once},
whereas SMR allows those parties to
agree on an infinitely long \emph{sequence} of values with the additional guarantee that values input to honest parties are eventually included in the sequence.
(See Section~\ref{section:Definitions} for formal definitions. Note that SMR is not obtained by simply repeating a BA protocol multiple times;
see further discussion in Section~\ref{sec:related_work}.)
The desired properties should hold
even in the presence of some fraction of corrupted parties who may
behave arbitrarily.
SMR protocols are deployed in real-world distributed data centers, and the problem
has received renewed attention in the context of \emph{blockchain protocols} used for cryptocurrencies and other applications.

Existing SMR protocols assume either a \emph{synchronous network}, where all messages are delivered within some publicly known time bound~$\Delta$, or an \emph{asynchronous network}, where messages can be delayed arbitrarily.
Although it may appear that protocols designed for the latter setting are strictly more secure, this is not the case because they also (inherently) tolerate a lower fraction of corrupted parties.
Specifically, assuming a public-key infrastructure (PKI) is available to the parties, SMR protocols tolerating up to $t_s < n/2$ adversarial corruptions are possible in a synchronous network, but in an asynchronous network SMR is achievable only for $t_a < n/3$ faults
(see~\cite{DBLP:journals/cj/CorreiaNV06}).

We study here so-called \emph{network-agnostic} SMR protocols that offer meaningful guarantees regardless of the network in which they are run.
That is,
fix thresholds $t_a, t_s$ with $0 \leq t_a < n/3$ and $t_a \leq t_s < n/2$.
We seek to answer the following question: Assuming a PKI, is it possible to
have an SMR protocol that tolerates
(1)~$t_s$ (adaptive) corruptions if the network is synchronous and (2)~$t_a$ (adaptive) corruptions even if the network
is asynchronous? We show that the answer is positive iff $t_a + 2t_s < n$.

Our work is directly inspired by recent results of Blum et al.~\cite{BKL19}, who study the same problem but for the simpler case of Byzantine agreement. We match their bounds on~$t_a, t_s$ and, as in their work, show that these bounds are optimal in our setting.\footnote{It is not clear that SMR implies BA in the network-agnostic setting when $t_a + 2t_s \geq n$. Thus, impossibility of SMR when $t_a + 2t_s \geq n$ does not follow from the impossibility result for BA shown by Blum et al.~\cite{BKL19}.}
While the high-level structure of our SMR protocol resembles the high-level structure of their BA protocol, in constructing our protocol we need to address several technical challenges (mainly due to the stronger liveness property required for SMR; see the following section) that do not arise in their work.

\subsection{Related Work} \label{sec:related_work}

There is extensive prior work on designing both Byzantine agreement and SMR/blockchain protocols; we do not provide an exhaustive survey, but instead focus only on the most relevant prior work. 

As argued by Miller et al.~\cite{CCS:MXCSS16},
many well-known SMR protocols that tolerate malicious faults (e.g., \cite{PBFT,Zyzzyva}) require at least partial synchrony in order to achieve liveness. Their HoneyBadger protocol~\cite{CCS:MXCSS16} was designed
specifically for asynchronous networks, but can only handle $t<n/3$ faults even if run in a synchronous network. Blockchain protocols are typically analyzed assuming synchrony~\cite{EC:GarKiaLeo15,EC:PasSeeShe17}; Nakamoto consensus, in particular, assumes that messages will be delivered much faster than the time required to solve proof-of-work puzzles.

We emphasize that SMR is \emph{not} realized by simply repeating a (multi-valued) BA protocol multiple times. In particular, the validity property of BA only guarantees that if a value is input by all honest parties then that value will be output by all honest parties.
In the context of SMR the parties each hold multiple inputs in a local buffer (where those inputs may arrive at arbitrary times), and there is no way to ensure that all honest parties will select the same value as input to some execution of an underlying BA protocol.
Although generic techniques for compiling a BA protocol into an SMR protocol are known~\cite{DBLP:journals/cj/CorreiaNV06}, those compilers are not network-agnostic and so do not suffice to solve our problem.

Our work focuses on protocols being run in a network that may be either synchronous or fully asynchronous.
Other work looking at similar problems includes
that of Malkhi et al.~\cite{arxiv:MNR}, who consider
networks that may be either
synchronous or \emph{partially} synchronous;
Liu et al.~\cite{DBLP:conf/osdi/LiuVCQV16}, who design a protocol
that tolerates a minority of malicious faults in a
synchronous network, and a minority of \emph{fail-stop} faults
in an asynchronous network; and Guo et al.~\cite{GPS19} and Abraham et al.~\cite{AMNRY19}, who
consider temporary disconnections between two synchronous network components.

A slightly different line of work~\cite{PS17,EC:PasShi18,EPRINT:LosMor18,LLMMT19} looks at designing protocols with good \emph{responsiveness}. Roughly speaking, such protocols still require the network to be synchronous, but terminate more quickly if the actual message-delivery time is lower than the known upper bound~$\Delta$. Kursawe~\cite{Kursawe02} designed a protocol for an asynchronous network that terminates more quickly if the network is synchronous, but does not tolerate more faults in the latter case.
Finally, other work~\cite{DISC:FitNie09,PKC:DGKN09,PODC:BeeHirNie10,DBLP:journals/tit/PatraR18}
considers a model where synchrony is available
for some (known) limited period of time, but the network is asynchronous afterward.




\subsection{Paper Organization}

We define our model in
Section~\ref{section:Model}, before giving definitions for the various tasks we consider in Section~\ref{section:Definitions}.
In Section~\ref{sec:ACS-all} we describe a network-agnostic protocol for the asynchronous common subset (ACS) problem. The ACS protocol is used as a subprotocol of our main result, a network-agnostic SMR protocol, that is described and analyzed in Section~\ref{sec:SMR}.
In Section~\ref{sec:lower-bound} we prove a lower bound showing that the thresholds we achieve are tight for network-agnostic SMR protocols. As discussed, Blum et al.~\cite{BKL19} show an analogous result for BA that does not directly apply to our setting.

\section{Model}\label{section:Model}
\textbf{Setup assumptions and notation.}
We consider a network of $n$ parties $P_1,\ldots, P_n$ who communicate
over point-to-point authenticated channels.
We assume that the parties have established a
public-key infrastructure prior to the protocol execution.
That is, we assume that all parties hold the same vector $(pk_1,
\ldots, pk_n)$ of public keys for a digital-signature scheme,
and each honest party $P_i$ holds the honestly generated
secret key $sk_i$ associated with~$pk_i$.
A \emph{valid signature $\sigma$ on $m$ from~$P_i$} is one for
which $\vrfy_{pk_i}(m, \sigma)=1$. For readability, we use $\signed{m}{i}$ to denote a tuple
$(i,m,\sigma)$ such that $\sigma$ is a valid signature on message $m$
signed using~$P_i$'s secret key.

For simplicity, we treat signatures as ideal (i.e., perfectly unforgeable);
we also implicitly assume that
parties use some form of domain separation when signing (e.g.,  by using unique
session IDs) to ensure that signatures are valid only
in the context in which they are generated.

Where applicable, we use $\kappa$ to denote a statistical security parameter.

\medskip\noindent
\textbf{Adversarial model.} 
We consider the security of our protocols in the presence of an adversary who
can \emph{adaptively} corrupt some number of parties.
The adversary may coordinate the behavior of
corrupted parties and cause them to deviate arbitrarily from the protocol.
Note, however, that our claims about adaptive security are
only with respect to the property-based definitions found in Section~\ref{section:Definitions},
not with respect to a simulation-based definition (cf.~\cite{EC:HirZik10,PODC:GKKZ11}).

\medskip\noindent
\textbf{Network model.} We consider two possible settings for the network. In the
\emph{synchronous} case, all messages are delivered within some known
time~$\Delta$ after they are sent, but the adversary can reorder and delay
messages subject to this bound. (As a consequence, the
adversary can potentially be \emph{rushing}, i.e., it
can wait to receive all incoming messages in a round before sending its own messages.)
In this setting, we also assume 
all parties begin the protocol at
the same time, and  parties' clocks progress at the same rate.
When we
say the network is \emph{asynchronous}, we mean that the adversary can delay
messages for an arbitrarily long period
of time, though messages must
eventually be delivered.
We do not make any assumptions on parties' local clocks in the asynchronous case.

We view the network as being either synchronous or asynchronous for the lifetime of the protocol (although we stress that the honest parties do not know which is the case).

\section{Definitions}\label{section:Definitions}

Although we are ultimately interested in state machine replication, our main protocol relies on various subprotocols for different tasks. We therefore provide relevant definitions here.
Throughout, when we say a protocol achieves some property, we include the case where it achieves that property with overwhelming probability (in the implicit parameter~$\kappa$).

\subsection{Useful Subprotocols}
In some cases we consider protocols where parties may not terminate (even upon generating output); for this reason, we mention termination explicitly in some definitions.
Honest parties are those who are not corrupted by the end of the execution.


\medskip\noindent{\bf Reliable broadcast.}
A \emph{reliable broadcast} protocol allows parties to agree
on a value chosen by a designated sender.  In contrast to the stronger notion of \emph{broadcast}, here honest parties might not terminate (but, if so, then none of them terminate).

\begin{definition}[Reliable broadcast]
	Let $\Pi$ be a protocol executed by parties $P_1, \ldots, P_n$,
	where a designated sender $P^*\in\{P_1, \ldots, P_n\}$ begins holding input~$v^*$ and parties terminate upon generating output.
	\begin{itemize}

		\item {\bf Validity:} $\Pi$ is {\sf $t$-valid} if the
		following holds whenever at most $t$ parties are
		corrupted: if $P^*$ is honest, then every honest party outputs~$v^*$.
		
		\item {\bf Consistency:}  $\Pi$ is {\sf $t$-consistent} if the following holds whenever
		at most $t$ parties are corrupted: either no honest party outputs anything, or all honest parties output the same value~$v\in\bool$.

	\end{itemize}
	If $\Pi$ is $t$-valid and $t$-consistent, then we say it is {\sf $t$-secure}.
\end{definition}

\medskip\noindent{\bf Byzantine agreement.}
A \emph{Byzantine agreement} protocol allows parties who each hold some
initial value to agree on an output value. 

\begin{definition}[Byzantine agreement] \label{def:BA}
	Let $\Pi$ be a protocol executed by parties $P_1, \ldots, P_n$,
	where each party $P_i$ begins holding input $v_i \in \bool$.
	\begin{itemize}		
		\item {\bf Validity:} $\Pi$ is {\sf $t$-valid} if the
		following holds whenever at most $t$ of the parties are
		corrupted: if every honest party's input is equal to
		the same value~$v$, then every honest party
		outputs~$v$.
		
		\item {\bf Consistency:} $\Pi$ is {\sf $t$-consistent} if
		the following holds whenever at most $t$ of the parties
		are corrupted: every honest party outputs the same value~$v\in \bool$.
		
		\item {\bf Termination:} $\Pi$ is {\sf $t$-terminating} if whenever
		at most $t$  parties are corrupted, every honest
		party terminates with some output in~$\bool$.
		
	\end{itemize}
	If $\Pi$ is $t$-valid, $t$-consistent, and $t$-terminating, then we say it is {\sf $t$-secure}.
\end{definition}

\medskip\noindent{\bf Asynchronous common subset (ACS).}
Informally, a protocol for the \emph{asynchronous common subset}  problem~\cite{PODC:BenKelRab94}
allows $n$ parties, each with some input, to agree on a
subset of those inputs. (The term ``asynchronous'' in the name
is historical, and one can also consider protocols for this task in the synchronous setting.)

\begin{definition}[ACS]
	Let $\Pi$ be a protocol executed by parties $P_1, \ldots, P_n$,
	where each $P_i$ begins holding input $v_i \in \bool^*$, and parties output sets of size at most~$n$.
	\begin{itemize}
		\item{\bf Validity:} $\Pi$ is {\sf $t$-valid} if the following holds whenever at most $t$
		parties are corrupted: if every honest party's input is equal to the same value~$v$, then every honest party outputs~$\{v\}$.
		
		\item {\bf Liveness:} $\Pi$ is {\sf $t$-live} if
		whenever at most $t$ of the parties are corrupted,
		every honest party produces output.
		
		\item {\bf Consistency:} $\Pi$ is {\sf $t$-consistent} if whenever at most $t$ parties are corrupted, all honest parties output the same set~$S$. 
		
		\item{\bf Set quality:}  $\Pi$ has {\sf $t$-set quality} if
		the following holds whenever at most $t$
		parties are corrupted: if an honest party outputs a set~$S$, then $S$ contains the inputs of at least $t+1$ honest parties.
	\end{itemize}
\end{definition}


\subsection{State Machine Replication}


Protocols for \emph{state machine replication} (SMR) allow parties to maintain agreement on an ever-growing, ordered sequence of \emph{blocks}, where a block is a set of values called \emph{transactions}. An SMR protocol does not terminate but instead continues indefinitely.
We model the sequence of blocks output by a party $P_i$ via a write-once array $\blocks_i=\block{i}{1}, \block{i}{2}, \ldots$ maintained by~$P_i$,
each entry (or \emph{slot}) of which is initially equal to~$\perp$.
We say that $P_i$ \emph{outputs a block in slot~$j$} when $P_i$  writes a block to~$\block{i}{j}$;
if $\block{i}{j} \neq \perp$ then we call $\block{i}{j}$ the \emph{block output by $P_i$ in slot~$j$}.


It is useful to
define a notion of \emph{epochs} for each party. (We stress that these are not global epochs; instead, each party maintains a local view of its current epoch.) Formally, we assume that each party $P_i$ maintains a write-once array $\epochs_i=\epochs_i[1], \epochs_i[2], \ldots$, each entry of which is initialized to~0. We say $P_i$ \emph{enters epoch~$j$} when it sets $\epochs_i[j]:=1$, and
require:
\begin{itemize}
	\item For $j>1$, $P_i$ enters epoch~$j-1$ before entering epoch~$j$.
	\item $P_i$ enters epoch~$j$ before outputting a block in slot~$j$.
\end{itemize}

An SMR protocol is run in a setting where parties asynchronously receive inputs (i.e., transactions) as the protocol is being executed; each party $P_i$ stores transactions it receives in a local buffer~$\buf_i$.
We imagine these transactions
as being provided to parties by some mechanism external to the protocol (which could involve a gossip protocol run among the parties themselves), and make no assumptions about the arrival times of these transactions at any of the parties.

%

\begin{definition}[State machine replication]
	Let $\Pi$ be a protocol executed by parties $P_1, \ldots, P_n$ who are provided with transactions as input and locally maintain arrays $\blocks{}{}$ and $\epochs$ as described above.
	\begin{itemize}
		\item {\bf Consistency:} $\Pi$ is {\sf $t$-consistent} if the following holds whenever at most $t$ parties are corrupted: if an honest party outputs a block $B$ in slot~$j$ then all parties that remain honest output $B$ in slot~$j$.
		
		\item {\bf Strong liveness:} $\Pi$ is {\sf $t$-live}
		if the following holds whenever at most~$t$ parties are corrupted: for any transaction $\tx$ for which every honest party received~$\tx$ before entering epoch~$j$,
		every party that remains honest outputs a block that contains~$\tx$ in some slot~$j' \leq j$ .
		
		\item  {\bf Completeness:} $\Pi$ is {\sf $t$-complete}
		if the following holds whenever at most $t$ parties are corrupted: for all~$j\geq 0$, every party that remains honest outputs some block in slot~$j$.
	\end{itemize}
	If $\Pi$ is $t$-consistent, $t$-live, and $t$-complete, then we say it is {\sf $t$-secure}.
\end{definition}

Our liveness definition is stronger than usual, in that we require a transaction $\tx$ that appears in all honest parties' buffers by epoch~$j$ to be included in a block output by each honest party in some slot~\mbox{$j'\leq j$}. (Typically, liveness only requires that each honest party eventually output a block containing~$\tx$.)
This stronger notion of liveness is useful for showing that SMR implies Byzantine agreement (cf.\ Appendix~\ref{sec:SMR-BA}) and
is achieved by our protocol.

In our definition, a transaction $\tx$ is only guaranteed to be contained in a block output by an honest party if \emph{all} honest parties receive~$\tx$ as input. A stronger definition would be to require this to hold even if only a \emph{single} honest party receives $\tx$ as input. It is easy to achieve the latter from the former, however, by simply having honest parties gossip all transactions they receive to the rest of the network.

Our definition does not require that honest parties output a block in slot \mbox{$j-1$} before outputting a block in slot~$j$. 
If this behavior is undesirable, one could instruct each party to withhold outputting a block in slot~$j$ until it outputs blocks in all slots prior to~$j$. Any protocol secure with respect to our definition would remain secure if modified in this way.

\section{An ACS Protocol with Higher Validity Threshold}\label{sec:ACS-all}
Throughout this section, we assume an asynchronous network.
We construct an ACS protocol that is secure when the number of corrupted parties is below one threshold, and provides validity even for some higher corruption threshold. That is, fix $t_a \leq t_s$ with $t_a + 2\cdot t_s < n$. We show an ACS protocol that is $t_a$-secure, and achieves validity even for $t_s$ corruptions. This protocol will be a key ingredient in our SMR protocol.

Our construction follows the high-level approach taken by Miller et al.~\cite{CCS:MXCSS16}, who devise an ACS protocol based on subprotocols for reliable broadcast and Byzantine agreement. In our case we need a reliable broadcast protocol that achieves validity for $t_s\geq n/3$ faults, and in Section~\ref{sec:Bracha} we show such a protocol.
We then describe and analyze our ACS protocol in Section~\ref{sec:ACS}.

\subsection{Reliable Broadcast with Higher Validity Threshold}
\label{sec:Bracha}
In Figure~\ref{prot:mbb}, we present a variant of Bracha's (asynchronous) reliable broadcast protocol~\cite{PODC:Bracha84} that allows for a more general tradeoff between consistency and validity. Specifically, the protocol is parameterized by a threshold~$t_s$; for any
$t_a\leq t_s$ with $t_a + 2\cdot t_s < n$, the protocol
achieves $t_a$-consistency and $t_s$-validity.

\protocol{Protocol $\BB$}{Bracha's reliable broadcast protocol, parameterized by~$t_s$. }{prot:mbb}{
	%
	The sender $P^*$ sends its input $v^*$ to all parties. Then each party does:
	\begin{itemize}
		\item Upon receiving $v^*$ from $P^*$, send $(\echo, v^*)$ to all parties.
		\item Upon receiving $(\echo, v^*)$ messages on the same value $v^*$ from $n-t_s$ distinct parties, do: if $(\ready, v^*)$ was not yet sent, then send $(\ready, v^*)$ to all parties.
		\item Upon receiving $(\ready,v^*)$ messages on the same value $v^*$ from $t_s+1$ distinct parties, do: if $(\ready, v^*)$ was not yet sent, then send $(\ready, v^*)$ to all parties.
		\item Upon receiving $(\ready, v^*)$ messages on the same value $v^*$ from $n-t_s$ distinct parties, output $v^*$ and terminate.
	\end{itemize}
}

\begin{lemma} If $t_s < n/2$ then $\BB$ is $t_s$-valid.
\end{lemma}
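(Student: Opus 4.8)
The plan is to trace the flow of $(\echo, \cdot)$ and $(\ready, \cdot)$ messages in an execution where the sender $P^*$ is honest, the network is asynchronous, and at most $t_s < n/2$ parties are corrupted, and conclude that every honest party eventually outputs $v^*$. First I would observe that since $P^*$ is honest, it sends $v^*$ to all parties, and all messages are eventually delivered; hence every honest party eventually receives $v^*$ from $P^*$ and sends $(\echo, v^*)$ to everyone. No honest party ever sends $(\echo, w)$ for $w \neq v^*$, so no honest party can collect $n - t_s$ echoes on a value other than $v^*$: the number of honest parties is at least $n - t_s > n/2 > t_s \geq$ (number of corrupted parties), and $n - t_s$ echoes would require more than $t_s$ of them to come from corrupted parties, which is impossible. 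Thus the only value on which any honest party can ever send an initial $(\ready, \cdot)$ message (triggered by $n - t_s$ echoes) is $v^*$.

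Next I would argue that every honest party eventually sends $(\ready, v^*)$. Since there are at least $n - t_s$ honest parties and each of them eventually sends $(\echo, v^*)$, every honest party eventually receives $(\echo, v^*)$ from $n - t_s$ distinct parties, and so (if it has not already) sends $(\ready, v^*)$. Combined with the previous paragraph, $v^*$ is the only value on which honest parties ever send $(\ready, \cdot)$ — the "amplification" rule requires $t_s + 1$ ready messages on a value, and since all honest ready messages are on $v^*$ and there are only $t_s$ corrupted parties, no honest party is ever triggered to send $(\ready, w)$ for $w \neq v^*$. Therefore each honest party eventually receives $(\ready, v^*)$ from at least $n - t_s$ distinct parties (the honest ones suffice), so each honest party outputs $v^*$ and terminates.

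The main obstacle — really the only subtlety — is verifying the counting inequalities that guarantee corrupted parties cannot forge a quorum of $n - t_s$ echoes or $t_s + 1$ readies on a wrong value; these all reduce to the single fact $t_s < n/2$, which gives $n - t_s > t_s$ and hence any set of $n - t_s$ parties contains an honest one (in fact a majority of honest ones), and any set of $t_s + 1$ parties contains an honest one. I would also note in passing that $t_s$-validity only concerns the case of an honest sender, so no reasoning about equivocation by $P^*$ is needed; consistency (for the lower threshold $t_a$) is handled separately in the following lemma.
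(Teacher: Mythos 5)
Your proof is correct and follows essentially the same route as the paper's: first showing that all honest parties eventually accumulate $n-t_s$ echoes and then $n-t_s$ readies on $v^*$ and so output it, and then using the counting bound $t_s < n-t_s$ to rule out any echo quorum, ready amplification, or output on a value $w \neq v^*$. Your treatment is slightly more explicit about the two distinct triggers for sending a $\ready$ message, but the argument is the same.
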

\begin{proof}
	Assume there are at most $t_s$ corrupted parties, and the sender is honest. All honest parties receive the same value~$v^*$ from the sender, and
	consequently send $(\echo, v^*)$ to all other parties. Since there
	are at least $n-t_s$ honest parties, all honest parties receive
	$(\echo, v^*)$ from
	at least $n-t_s$ different parties,
	and as a result send $(\ready, v^*)$ to all other parties.
	By the same argument, all honest parties receive $(\ready, v^*)$ from at least $n-t_s$ parties, and so can output~$v^*$
	(and terminate).
	
	Fix any $v \neq v^*$.
	To complete the proof, we argue that no
	honest party will output~$v$.
	Note first that no honest party will send $(\echo, v)$.
	Thus, any honest party will receive $(\echo,v)$ from at most $t_s$ other parties. Since $t_s < n-t_s$, no honest party will ever send $(\ready, v)$. By the same argument, this shows that honest parties will receive $(\ready,v)$ from at most $t_s$ other parties, and hence will not output~$v$.   \qed
\end{proof}

\begin{lemma}
	Fix $t_a\leq t_s$ with $t_a + 2\cdot t_s<n$. Then $\BB$ is $t_a$-consistent.
\end{lemma}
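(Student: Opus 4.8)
The plan is to follow the standard structure of the consistency proof for Bracha's reliable broadcast, while tracking carefully where the asymmetric bound $t_a + 2t_s < n$ is needed in place of the usual $3t<n$. Throughout, assume at most $t_a$ parties are corrupted, and recall that a corrupt sender sends a single message to each party, so each honest party sends $(\echo, v)$ for at most one value $v$ (the first it receives from $P^*$). The definition of consistency leaves two cases: either no honest party outputs (nothing to prove), or some honest party outputs; in the latter case the goal is to show that any two honest parties that output, output the same value.

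First I would establish the key claim: all honest parties that ever send a $(\ready, \cdot)$ message send it on one and the same value $v^\dagger$. An honest party sends $(\ready, v)$ only because it saw either (a) $(\echo, v)$ from $n-t_s$ distinct parties, or (b) $(\ready, v)$ from $t_s+1$ distinct parties. I would argue by induction on the time order in which honest parties first send a ready message. The first such honest party must have used rule (a), since rule (b) would require $t_s+1$ ready messages but at most $t_a \le t_s$ corrupted parties are available and no honest party has yet sent one. For the inductive step, suppose honest $P_i$ sends $(\ready, v)$. If it used rule (b), then among the $t_s+1$ senders of $(\ready, v)$ at least $t_s+1-t_a \ge 1$ are honest (using $t_a \le t_s$), so some honest party sent $(\ready, v)$ strictly earlier, and by induction $v = v^\dagger$. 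If it used rule (a), I again claim $v = v^\dagger$: the first honest ready was also produced by rule (a), so there exist sets $E_{v^\dagger}, E_v$ of parties, each of size $\ge n-t_s$, that sent $(\echo, v^\dagger)$ and $(\echo, v)$ respectively; since honest parties echo at most one value, $E_{v^\dagger}\cap E_v$ consists only of corrupted parties, so $|E_{v^\dagger}\cup E_v| \ge 2(n-t_s) - t_a$. As this is at most $n$, we get $n \le 2t_s + t_a$, contradicting $t_a + 2t_s < n$ unless $v = v^\dagger$.

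Output consistency then follows easily: if an honest party outputs $v$, it received $(\ready, v)$ from $n-t_s$ distinct parties, of which at least $n-t_s-t_a \ge 1$ are honest (note $n - t_s - t_a > t_s \ge 0$ by the threshold bound). Hence some honest party sent $(\ready, v)$, so $v = v^\dagger$, and therefore every honest party that outputs, outputs $v^\dagger$; this is exactly $t_a$-consistency. I expect the main obstacle to be the key claim about ready messages: getting the induction right so that the rule-(b) ``amplification'' step cannot introduce a fresh value (which is precisely where $t_a \le t_s$ buys an honest witness), and the quorum-intersection count on the $\echo$ phase, which is exactly where the hypothesis $t_a + 2t_s < n$ is essential — the usual argument only needs $2(n-t_s) - t > n$, i.e.\ $n > 2t$, but allowing validity all the way up to $t_s \ge n/3$ forces the asymmetric bound.
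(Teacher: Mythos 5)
Your agreement argument is correct and essentially matches the paper's (the paper also isolates the \emph{first} honest party to send a $\ready$ message and runs the same quorum-intersection count $2(n-t_s)-t_a>n$ on the $\echo$ phase; your explicit induction over rule~(b) is a slightly more careful version of the same idea). However, there is a genuine gap: you prove only half of what $t_a$-consistency requires. The definition states that either \emph{no} honest party outputs anything, or \emph{all} honest parties output the same value. You explicitly reduce this to ``any two honest parties that output, output the same value,'' which drops the totality requirement: if one honest party outputs $v$, you must also show that every other honest party eventually outputs $v$, not merely that it cannot output something else. This property is not cosmetic --- it is the entire reason Bracha's protocol has the $t_s+1$-$\ready$ amplification rule, and the paper relies on it downstream (e.g., in the liveness proof of $\ACS$, where $t_a$-consistency of the broadcast is invoked to conclude that executions that terminate for one honest party terminate for all).

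The missing step is the first paragraph of the paper's proof: if an honest $P_i$ outputs $v$, it received $(\ready,v)$ from at least $n-t_s$ distinct parties, of which at least $n-t_s-t_a \geq t_s+1$ are honest (here the bound $t_a+2t_s<n$ is used again, and more sharply than your ``$\geq 1$''). Those $t_s+1$ honest parties' $\ready$ messages reach every honest party, so by the amplification rule every honest party sends $(\ready,v)$; hence every honest party eventually receives $(\ready,v)$ from at least $n-t_a \geq n-t_s$ parties and outputs $v$. You should add this argument; combined with your agreement claim it completes the proof.
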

\begin{proof}
	Suppose at most $t_a$ parties are corrupted, and that an honest party $P_i$ outputs~$v$. Then $P_i$ must have received $(\ready, v)$ messages from at least
	$n-t_s$ distinct parties,
	at least $n-t_s-t_a \geq t_s+1$ of whom are honest. 	
	Thus, all honest parties
	receive $(\ready, v)$ messages from at least $t_s+1$ distinct parties, and so all honest parties send $(\ready, v)$ messages to everyone. It follows that all honest parties receive $(\ready, v)$ messages from at least $n-t_a\geq n-t_s$ parties, and so can output~$v$ as well.
	
	To complete the proof, we argue that
	honest parties cannot output $v' \neq v$. We argued above that all honest parties send $(\ready, v)$ to everyone. Let $P$ be the first honest party to do so. Since $t_a<t_s+1$, that party must have sent $(\ready, v)$ in response to receiving $(\echo, v)$ messages from at least $n-t_s$ distinct parties. If some honest $P_j$ outputs $v'$ then, arguing similarly, some honest party $P'$ must have received $(\echo, v')$ messages from at least $n-t_s$ distinct parties. But this is a contradiction, since honest parties send only a single $\echo$ message but $2\cdot (n-t_s) -t_a > n$.  \qed
\end{proof}

\subsection{ACS with Higher Validity Threshold}
\label{sec:ACS}
\def\bb{{\sf Bcast}}

In Figure~\ref{prot:opt}
we describe an ACS protocol $\ACS$ that is parameterized by thresholds~$t_a,t_s$, where $t_a\leq t_s$ and
$t_a + 2\cdot t_s < n$.
Our protocol relies on two subprotocols: a reliable broadcast protocol $\bb$ that is $t_s$-valid and $t_a$-consistent (such as the protocol $\BB$ from the previous section), and a Byzantine agreement protocol $\aba$ that is $t_a$-secure (since $t_a < n/3$, any asynchronous BA protocol secure for that threshold can be used).
Our ACS protocol runs several executions of these protocols as sub-routines, so to distinguish between them we denote the $i$th execution by $\bb_i$, resp.,~$\aba_i$, and say that these executions \emph{correspond to party~$P_i$}.


\protocol{Protocol $\ACS$}{An ACS protocol, parameterized by~$t_a$ and~$t_s$.}{prot:opt}{
	At any point during a party's execution of the protocol, let \mbox{$S^*\bydef \{i : \aba_i$ output 1$\}$} and let $s=|S^*|$. Define the following boolean conditions:
	\begin{itemize}
		\item $C_1(v)$: at least $n-t_s$ executions $\{\bb_i\}_{i \in [n]}$ have output~$v$.
		\item $C_1$: $\exists v$ for which $C_1(v)$ is true.
		\item $C_2(v)$: $s \geq n-t_a$, all executions $\{\aba_i\}_{i \in [n]}$ have terminated, and a majority of the executions
		$\{\bb_i\}_{i \in S^*}$ have output~$v$.
		\item $C_2$: $\exists v$ for which $C_2(v)$ is true.
		\item $C_3$: $s \geq n-t_a$, all executions $\{\aba_i\}_{i \in [n]}$ have terminated,
		and all executions $\{\bb_i\}_{i \in S^*}$ have terminated.
	\end{itemize}
	
	Each party does:
	\begin{itemize}
		\item For all $i$: run $\bb_i$ with $P_i$ as the sender, where $P_i$ uses input~$v_i$.
		\item When $\bb_i$ terminates with output $v'_i$ do: if execution of $\aba_i$ has not yet begun, 
		run $\aba_i$ using input~$1$.
		
		\item When $s \geq n-t_a$,
		run any executions $\{\aba_i\}_{i \in [n]}$ that have not yet begun, using input~0.

		
		\item \textbf{(Exit 1:)} If at any point $C_1(v)$ for some~$v$, output~$\{v\}$. 
		
		\item \textbf{(Exit 2:)} If at any point $\neg C_1 \wedge C_2(v)$ for some~$v$, output~$\{v\}$.
		
		\item\textbf{(Exit 3:)} If at any point $\neg C_1 \wedge \neg C_2 \wedge C_3$,
		output $S:=\{v'_i\}_{i \in S^*}$.
	\end{itemize}
	
	After outputting:
	\begin{itemize}
		\item Continue to participate in any ongoing $\bb$ executions.
		\item Once $C_1=\true$, stop participating in any ongoing $\aba$ executions.
	\end{itemize}
}





\begin{lemma}\label{lem:acs_validity} If $t_a + 2 \cdot t_s < n$,
	then $\ACS$ is $t_s$-valid. 
\end{lemma}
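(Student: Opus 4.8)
The plan is to fix an arbitrary execution with at most $t_s$ corrupted parties in which every honest party holds the same input $v$, and to argue that every honest party outputs $\{v\}$. Two consequences of the hypothesis $t_a + 2t_s < n$ are all I will need about the parameters: first, $t_s < n/2$, so the reliable-broadcast protocol $\bb$ is $t_s$-valid by the earlier lemma; second, $n - t_a > 2t_s$. Crucially, I will not use any property of the $\aba_i$ subexecutions: since up to $t_s > t_a$ parties may be corrupt, $t_a$-security of $\aba$ gives nothing, so I will treat $S^*$ purely combinatorially, using only that it is a set of party indices of which at most $t_s$ correspond to corrupted parties.

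First I would pin down the reliable-broadcast instances. There are at least $n - t_s$ honest senders, each running $\bb_i$ on input $v$; by $t_s$-validity of $\bb$, at every honest party each of these executions eventually outputs $v$ (and, since a party terminates upon producing output, never outputs any other value). Hence $C_1(v)$ eventually holds at every honest party, while for $v' \neq v$ only the at most $t_s$ executions with corrupted senders can output $v'$, and $t_s < n - t_s$, so $C_1(v')$ never holds. This already shows that every honest party eventually exits, and that exiting via Exit~1 yields output $\{v\}$; it remains to handle Exits~2 and~3.

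For Exit~2, suppose $C_2(v')$ holds at an honest party, so $s \ge n - t_a$ and a majority of the $s$ executions $\{\bb_i\}_{i\in S^*}$ have output $v'$; since honest senders' executions output only $v$, if $v' \ne v$ then at most $t_s$ of these $s$ executions output $v'$, and $t_s < (n - t_a)/2 \le s/2$ is not a majority --- a contradiction, so Exit~2 also yields $\{v\}$. The main obstacle is Exit~3, which I would rule out entirely by showing $C_3 \Rightarrow C_2$, so that its guard $\neg C_1 \wedge \neg C_2 \wedge C_3$ is never satisfiable. Indeed, if $C_3$ holds then $s \ge n - t_a$, all $\aba_i$ have terminated, and every $\bb_i$ with $i \in S^*$ has terminated; at least $s - t_s$ of the indices in $S^*$ are honest, their broadcasts all output $v$, and $s - t_s > s/2$ because $s \ge n - t_a > 2t_s$, so a majority of $\{\bb_i\}_{i \in S^*}$ output $v$ --- which, together with $s \ge n - t_a$ and the termination of all $\aba_i$, is exactly $C_2(v)$. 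Combining the three cases: each honest party eventually exits, necessarily via Exit~1 or Exit~2, and in either case outputs $\{v\}$.
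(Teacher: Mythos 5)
Your proof is correct and follows essentially the same route as the paper's: establish that $C_1(v)$ eventually holds (and $C_1(v')$ never does) via $t_s$-validity of $\bb$, rule out a wrong value at Exit~2 by counting that fewer than half of the $s \geq n-t_a > 2t_s$ executions in $S^*$ can have corrupted senders, and show Exit~3 is unreachable because a majority of $\{\bb_i\}_{i\in S^*}$ output $v$, forcing $C_2(v)$. The only differences are presentational (contrapositive phrasing for Exit~2, and stating Exit~3 as the implication $C_3 \Rightarrow C_2(v)$).
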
 
\begin{proof}
	Note that $t_s<n/2$. Say at most $t_s$ parties are dishonest, and all honest parties have the same input~$v$.
	By $t_s$-validity of $\bb$,
	at least $n-t_s$ executions of~$\{\bb_i\}$ (namely, those for which $P_i$ is honest) will result in $v$ as output, and so all honest parties can take Exit~1 and output~$\{v\}$.
	It is not possible for an honest party to take Exit~1 and output something other than~$\{v\}$, since $t_s < n-t_s$.
	Thus, it only remains to show that if an honest party takes some other exit then it must also output~$\{v\}$.
	Consider the two possibilities:
	
	\medskip\noindent{\bf Exit 2:} Suppose some honest party $P$ takes Exit~2 and outputs~$\{v'\}$.
	Then, for that party, $C_2(v')$ is true, and so $P$  must have seen at least
	$\lfloor \frac{s}{2}\rfloor +1$ of the $\{\bb_i\}_{i \in S^*}$ terminate with output~$v'$. Moreover, $P$ must have $s \geq n-t_a$.
	Together, these imply that $P$ has seen at least
	$$\left\lfloor \frac{n-t_a}{2}\right\rfloor +1\geq \left\lfloor \frac{2t_s}{2}\right\rfloor+1>t_s$$
	executions of $\{\bb_i\}$ terminate with output~$v'$.
	At least one of those executions must correspond to an honest party. But then $t_s$-validity of $\bb$ implies that $v'=v$.
	
	\medskip\noindent{\bf Exit 3:} Assume an honest party $P$ takes Exit~3.
	Then $P$ must have $s \geq n-t_a$,
	must have seen all executions $\{\aba_i\}_{i \in [n]}$ terminate, and
	must also have
	seen all executions $\{\bb_i\}_{i \in S^*}$ terminate.
	Because
	\[|S^*|=s\geq n-t_a > 2t_s,\]
	a majority of the executions $\{\bb_i\}_{i \in S^*}$ that $P$ has seen terminate must correspond to honest parties.
	By $t_s$-validity of $\bb$, 
	all those executions must have resulted in output~$v$. But then $C_2(v)$ must be true for $P$, and it would not have taken Exit~3.  \qed
\end{proof}

\begin{lemma}\label{acs_equal}
	Fix $t_a \leq t_s$ with $t_a + 2\cdot t_s < n$, and say at most $t_a$ parties are corrupted.
	If honest parties $P_1, P_2$ output sets $S_1, S_2$, then $S_1=S_2$.
\end{lemma}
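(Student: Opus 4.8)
The plan is to show that the three exits are mutually consistent across honest parties when at most $t_a$ parties are corrupted, by a case analysis on which exits $P_1$ and $P_2$ take. The key structural facts I would establish first are: (i) by $t_a$-consistency of $\bb$, if two honest parties both see $\bb_i$ terminate, they see the same output, so $v_i'$ is well-defined independent of which honest party observes it; (ii) by $t_a$-consistency and $t_a$-termination of $\aba$, all honest parties agree on the output of each $\aba_i$ that terminates, so the set $S^*$ is eventually the same for all honest parties (and in particular the final value of $s$ is the same); (iii) the condition $C_1(v)$ can hold for at most one value $v$, since $2(n-t_s) - t_a > n$ forces two such sets of $\bb$-executions to overlap in an execution corresponding to an honest party, and that execution has a single output by (i); similarly $C_2(v)$ can hold for at most one $v$, since a strict majority of $\{\bb_i\}_{i\in S^*}$ on two different values would overlap.

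Next I would handle the case where some honest party takes Exit~1 with output $\{v\}$. Since $C_1(v)$ became true for that party, at least $n-t_s$ of the $\bb_i$ have output $v$; at least $n - t_s - t_a \ge t_s + 1$ of these correspond to honest senders, and by $t_a$-consistency (actually by validity here, since the sender is honest and its input is fixed) all honest parties will eventually see those same $\bb_i$ output $v$, so $C_1(v)$ becomes true for every honest party. Hence every honest party that takes Exit~1 outputs $\{v\}$ by uniqueness of the $C_1$-value. For a party taking Exit~2 or Exit~3, it must have had $\neg C_1$ at the time it exited; but I need to argue it still outputs $\{v\}$. For Exit~2: $\neg C_1 \wedge C_2(v')$ holding means more than $t_s$ of the $\bb_i$ output $v'$ (via the counting bound $\lfloor (n-t_a)/2\rfloor + 1 > t_s$ from Lemma~\ref{lem:acs_validity}), one of which is honest, and by $t_a$-consistency of $\bb$ that honest execution's output agrees with whatever the Exit-1 party saw — forcing $v' = v$. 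For Exit~3: once $C_1(v)$ is true for the Exit-1 party it is eventually true for everyone, so a party cannot take Exit~3 after $C_1$ holds; but if it took Exit~3 before $C_1$ held for it, I argue (as in the Exit~3 analysis of Lemma~\ref{lem:acs_validity}) that a majority of $\{\bb_i\}_{i\in S^*}$ it saw were honest, all outputting $v$ by consistency, making $C_2(v)$ true for it — contradicting $\neg C_2$.

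The remaining and main case is when \emph{no} honest party takes Exit~1, so both $P_1$ and $P_2$ take Exit~2 or Exit~3. If both take Exit~3, then both output $S = \{v_i'\}_{i\in S^*}$; by fact (ii) they agree on $S^*$, and by fact (i) they agree on each $v_i'$, so $S_1 = S_2$. If both take Exit~2, uniqueness of the $C_2$-value (fact (iii)) gives $S_1 = S_2$. The genuinely delicate case is Exit~2 versus Exit~3: say $P_1$ takes Exit~2 outputting $\{v'\}$ and $P_2$ takes Exit~3 outputting $S_2 = \{v_i'\}_{i\in S^*}$. I need to rule this out, and the argument is that at the moment $P_2$ took Exit~3, $C_2$ was false for $P_2$, yet $P_1$'s having reached $C_2(v')$ means more than $t_s$ executions $\bb_i$ output $v'$, at least a majority of which (among those in $S^*$, using $|S^*| = s \ge n - t_a > 2t_s$) correspond to honest parties; by $t_a$-consistency these outputs are stable, so $P_2$ will also eventually see a majority of $\{\bb_i\}_{i\in S^*}$ output $v'$ — but this is exactly $C_2(v')$, and since $P_2$ took Exit~3 it must have checked $\neg C_2$ \emph{after} seeing all of $\{\bb_i\}_{i\in S^*}$ terminate, at which point a majority had already output $v'$, a contradiction. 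So this cross-case cannot occur, which completes the argument. The main obstacle is precisely this Exit~2/Exit~3 interaction: it requires carefully pinning down what information a party necessarily possesses at the instant it evaluates its exit condition, and leveraging that the $\bb$-outputs seen by one honest party are monotonically inherited (via $t_a$-consistency) by the others.
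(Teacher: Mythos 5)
Your overall plan---a case analysis on which exits $P_1$ and $P_2$ take, using $t_a$-consistency of $\bb$ and of $\aba$ to align the parties' views---is the same decomposition the paper uses, and your handling of the Exit~1/Exit~1, Exit~2/Exit~2, Exit~3/Exit~3, and Exit~2-vs-Exit~3 sub-cases is essentially correct (for the last of these the paper argues in the reverse direction, from the Exit-3 party's failure to see a majority to the impossibility of the Exit-2 party seeing one, but both directions work). The problem is in the two sub-cases where one party takes Exit~1 and the other takes Exit~2 or Exit~3; there your justifications do not hold up.

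For Exit~1 vs.\ Exit~2, you argue that among the more than $t_s$ executions outputting $v'$ there is one with an honest sender, and that ``by $t_a$-consistency that honest execution's output agrees with whatever the Exit-1 party saw.'' But the Exit-1 party's $C_1(v)$ only concerns some set of $n-t_s$ executions that output $v$; nothing guarantees that your chosen honest-sender execution lies in that set, so consistency of that one execution tells you nothing about $v$. The correct step is pure counting: $(n-t_s)+(t_s+1)>n$ forces the two index sets to overlap, and $t_a$-consistency of $\bb$ applied to a \emph{common} execution (honest sender or not) gives $v'=v$. Your numbers are right, so this is a repairable imprecision. The Exit~1 vs.\ Exit~3 sub-case is a genuine gap: you import the argument from the validity lemma that a majority of $\{\bb_i\}_{i\in S^*}$ have honest senders ``all outputting $v$ by consistency.'' Consistency never determines \emph{which} value an execution outputs, and validity only says an honest-sender execution outputs that sender's input---which, in the consistency lemma, differs across honest parties, so there is no reason these executions output $v$. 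The argument that actually works routes through the Exit-1 party's view: since at least $n-t_s$ executions output $v$ for $P_1$, $t_a$-consistency of $\bb$ implies at most $t_s$ executions can ever be seen (by any honest party) to terminate with a value other than $v$; the Exit-3 party has seen all $|S^*|\geq n-t_a$ executions in $S^*$ terminate, so at least $(n-t_a)-t_s>t_s$ of them output $v$, which strictly exceeds the number outputting anything else. Hence $C_2(v)$ holds for that party, contradicting $\neg C_2$. Without this counting step the sub-case is not ruled out.
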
 
\begin{proof}
	We consider different cases based on the possible exits taken by $P_1$ and~$P_2$, and show that
	in all cases their outputs agree.
	
	\medskip\noindent{\bf Case 1:} \emph{Either $P_1$ or $P_2$ takes Exit~1.} Without loss of generality, assume~$P_1$ takes Exit~1 and outputs~$\{v_1\}$. We consider different sub-cases:
	\begin{itemize}
		\item {\em $P_2$ takes Exit~1:} Say $P_2$ outputs~$\{v_2\}$.  Then $P_1$ and $P_2$ must have each seen at
		least $n-t_s$ executions of $\{\bb_i\}$ output $v_1$ and $v_2$, respectively.
		Since $t_s < n/2$, at least one of those executions
		must be the same. But then $t_a$-consistency of $\bb$ implies that $v_1=v_2$.
		
		\item {\em $P_2$ takes Exit~2:} Say $P_2$ outputs $\{v_2\}$. For $C_2(v_2)$ to be satisfied,
		$P_2$ must have $s \geq n-t_a$, and
		must have seen at least
		\[\left\lfloor \frac{s}{2}\right\rfloor + 1 \geq \left\lfloor \frac{n-t_a}{2}\right\rfloor +1\]
		executions of $\{\bb_i\}$ output $v_2$. As above, $P_1$ must
		have seen at least $n-t_s$ executions of $\{\bb_i\}$ output~$v_1$. But since
		\[(n-t_s) + \left\lfloor \frac{n-t_a}{2}\right\rfloor +1\geq n-t_s + \left\lfloor \frac{2t_s}{2}\right\rfloor+1>n,\]
		at least one of those executions
		must be the same. But then $t_a$-consistency of $\bb$ implies that $v_1=v_2$.
		
		\item {\em $P_2$ takes Exit~3:} We claim this cannot occur. Indeed, if $P_2$ takes Exit~3 then $P_2$
		must have $s \geq n-t_a$, and must have
		seen all executions $\{\aba_i\}_{i \in [n]}$ terminate and all executions
		$\{\bb_i\}_{i \in S^*}$ terminate.
		Because $P_1$ took Exit~1, $P_1$ must have seen at least $n-t_s$ executions $\{\bb_i\}_{i \in [n]}$
		output~$v_1$, and therefore (by $t_a$-consistency of~$\bb$) there are at most $t_s$ executions
		$\{\bb_i\}_{i \in [n]}$ that $P_2$ has seen terminate with a value other than~$v_1$. The number of executions of
		$\{\bb_i\}_{i \in S^*}$ that $P_2$ has seen terminate with output~$v_1$ is therefore at least
		$(n-t_a) - t_s > t_s$, which is strictly greater than the number of executions $\{\bb_i\}_{i \in S^*}$ that $P_2$ has
		seen terminate with a value other than~$v_1$. But then $C_2(v_1)$ is true for $P_2$, and it would not take
		Exit~3.
	\end{itemize}

	\medskip\noindent{\bf Case 2:} \emph{Neither $P_1$ nor $P_2$ takes Exit~1.}
	We consider two sub-cases:
	\begin{itemize}
		\item {\em $P_1$ and $P_2$ both take Exit~2.} Say $P_1$ outputs~$\{v_1\}$ and
		$P_2$ outputs~$\{v_2\}$. 	
		Both $P_1$ and $P_2$ must have seen all executions $\{\aba_i\}_{i \in [n]}$ terminate; by
		$t_a$-consistency of $\aba$ they must therefore hold the same~$S^*$. Since $C_2(v_1)$ holds for $P_1$, it must have seen a majority of the executions $\{\bb_i\}_{i \in S^*}$ output~$v_1$; similarly, $P_2$ must have seen a majority of the executions $\{\bb_i\}_{i \in S^*}$ output~$v_2$. Then $t_a$-consistency of $\bb$ implies $v_1=v_2$.

		\item {\em Either $P_1$ or $P_2$ takes Exit~3.} Say $P_1$ takes Exit~3.
		(The case where $P_2$ takes Exit~3 is symmetric.)
		As above, $P_1$ and $P_2$ agree on~$S^*$ (this holds regardless of whether $P_2$ takes Exit~2 or Exit~3). Since $C_3$ holds for $P_1$ but $C_2$ does not, $P_1$ must have seen all executions $\{\bb_i\}_{i \in S^*}$ terminate but without any value being output by a majority of those executions. But then $t_a$-consistency of $\bb$ implies that $P_2$ also does not see any value being output by a majority of those executions, and so will not take Exit~2.
		Since $P_2$ instead must take Exit~3, it must have seen all executions $\{\bb_i\}_{i \in S^*}$ terminate; $t_a$-consistency of $\bb$ then implies that $P_2$ outputs the
		same set as~$P_1$.
	\end{itemize}
	This completes the proof. \qed
\end{proof}


\begin{lemma}\label{acs_liveness}
	Fix $t_a \leq t_s$ and $t_a + 2\cdot t_s < n$. Then $\ACS$ is $t_a$-live.
\end{lemma}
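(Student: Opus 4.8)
The plan is to show that, when at most $t_a$ parties are corrupted, every honest party eventually reaches a state in which $C_3$ holds; since $C_3$ holding forces at least one of the three exit guards ($C_1$, or $\neg C_1\wedge C_2$, or $\neg C_1 \wedge \neg C_2 \wedge C_3$) to be satisfied, each honest party then produces output, so $\ACS$ is $t_a$-live. Throughout I will use that $\bb$ is $t_a$-valid (it is $t_s$-valid and $t_a\le t_s$) and $t_a$-consistent --- in particular, $t_a$-consistency guarantees that if one honest party outputs from $\bb_i$ then \emph{every} honest party eventually outputs the same value from $\bb_i$ --- and that $\aba$ is $t_a$-secure.

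The heart of the argument is to prove that $s \ge n - t_a$ eventually holds at every honest party. First I would show that \emph{some} honest party eventually reaches $s \ge n - t_a$: otherwise no honest party ever runs the ``input $0$ to all remaining $\aba_i$'' step, so every honest party starts $\aba_i$ only when $\bb_i$ terminates, always with input~$1$; for each of the $\ge n-t_a$ honest indices $i$, the instance $\bb_i$ has an honest sender and so terminates at all honest parties by $t_a$-validity, whence all honest parties input $1$ to $\aba_i$, and $t_a$-validity of $\aba$ together with $t_a$-termination makes it output~$1$ everywhere --- giving $s \ge n - t_a$ at every honest party, a contradiction. Then, fixing an honest party $P^\star$ that reaches $s \ge n - t_a$, I take a set $J$ of $n-t_a$ indices $j$ with $\aba_j$ having output~$1$ at $P^\star$: for each such $j$, $t_a$-validity of $\aba$ (contrapositive) implies some honest party input~$1$ to $\aba_j$ and therefore saw $\bb_j$ terminate, so by $t_a$-consistency of $\bb$ all honest parties eventually see $\bb_j$ terminate and run $\aba_j$, and then $t_a$-termination and $t_a$-consistency of $\aba$ force $\aba_j$ to terminate with output~$1$ at every honest party. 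Hence eventually $J\subseteq S^*$ at every honest party. Once an honest party has $s \ge n - t_a$ it starts every remaining $\aba_i$, so eventually every honest party runs every $\aba_i$ for $i\in[n]$; by $t_a$-termination and $t_a$-consistency of $\aba$ all these terminate with a common output, so $S^*$ stabilizes to a common set $S^*_{\mathsf{fin}}$ with $|S^*_{\mathsf{fin}}|\ge n - t_a$ at every honest party. Finally, arguing exactly as above, for each $i\in S^*_{\mathsf{fin}}$ the fact that $\aba_i$ output~$1$ means all honest parties eventually see $\bb_i$ terminate (so $\{v'_i\}_{i\in S^*_{\mathsf{fin}}}$ is eventually well-defined everywhere). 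At this point every honest party has $s\ge n-t_a$, all $\{\aba_i\}_{i\in[n]}$ terminated, $S^*=S^*_{\mathsf{fin}}$, and all $\{\bb_i\}_{i\in S^*_{\mathsf{fin}}}$ terminated --- i.e.\ $C_3$ holds --- so it has taken Exit~1 (if $C_1$ ever held, which is monotone), Exit~2 (if $\neg C_1\wedge C_2$ ever held), or Exit~3, and in every case has produced output.

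I expect the bootstrapping step --- showing $s$ eventually reaches $n-t_a$ at every honest party --- to be the main obstacle, since $s$ can be driven upward only through the interplay between $t_a$-validity of $\aba$ (an instance outputting~$1$ witnesses that \emph{some} honest party already saw the corresponding broadcast deliver) and $t_a$-consistency of $\bb$ (which propagates that delivery to \emph{all} honest parties). The remaining steps are essentially bookkeeping; the one subtlety to flag is that $C_2$ and $C_3$ are not monotone as $S^*$ grows, but this is harmless because it suffices that some exit guard is satisfied once $C_3$ has become permanently true.
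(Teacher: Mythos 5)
Your overall strategy---bootstrap $s \ge n-t_a$ at some honest party, propagate it to all honest parties, conclude that every $\aba_i$ terminates and that $\bb_i$ terminates for every $i$ with $\aba_i$-output~$1$, hence $C_3$ eventually holds and some exit guard fires---is essentially the paper's argument for one of its two cases, and that part is sound. The gap is that you never account for the instruction, executed after a party outputs, that once $C_1=\true$ the party \emph{stops participating} in any ongoing $\aba$ executions. An honest party can satisfy $C_1(v)$ and take Exit~1 as soon as $n-t_s$ broadcasts agree on~$v$, possibly long before any $\aba_i$ has terminated; from then on the remaining honest parties run the $\aba_i$'s without that party. The subprotocol $\aba$ is only assumed $t_a$-secure when all honest parties run it to completion, and an honest party that abandons an execution acts as an additional (crash) fault on top of the $t_a$ corruptions, so validity, consistency, and termination of those instances are no longer guaranteed. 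Consequently your central claim---that every honest party eventually reaches a state in which $C_3$ holds---can fail, and with it the conclusion that every honest party outputs. (The same issue already infects your bootstrapping step, which needs all honest parties to keep running each $\aba_i$ with input~$1$ until it terminates.)

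The repair is the case split the paper makes. If some honest party ever takes Exit~1, it has seen $n-t_s$ executions of $\{\bb_i\}$ output the same value~$v$; by $t_a$-consistency of $\bb$, every other honest party eventually sees those same executions output~$v$, so $C_1(v)$ becomes true everywhere and every honest party that has not already produced output can output $\{v\}$ via Exit~1---no appeal to the $\aba$ instances is needed. Only in the complementary case, where no honest party ever takes Exit~1 and hence no honest party ever drops out of an $\aba$ execution, is your chain of reasoning about the $\aba_i$'s legitimate. With that case distinction added, your argument coincides with the paper's proof.
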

\begin{proof}
	If some honest party $P$ takes 
	Exit~1 during an execution of~$\ACS$, then $P$ must have seen at least $n-t_s$ executions $\{\bb_i\}_{i \in [n]}$ with the same output~$v$. By $t_a$-consistency of $\bb$, all other honest parties will eventually see at least those $n-t_s$ executions output~$v$, and will generate output (if they have not already generated output via another exit). 
	
	It remains to consider the case where no honest parties take Exit~1.
	Let $H$ be the indices of parties who remain honest, with $|H| \geq n-t_a$.
	By $t_s$-validity of $\bb$, 
	all honest parties see the executions $\{\bb_i\}_{i \in H}$ terminate, and so 
	all honest parties initiate the executions $\{\aba_i\}_{i \in H}$.
	Since no honest party takes Exit~1, all honest parties continue to participate in all those executions. Consider some execution~$\aba_i$ being run by all honest parties.
	As long as no honest party has $s \geq n-t_a$, each honest party must be running $\aba_i$ using input~1. By $t_a$-validity of $\aba$, this means that all honest parties will eventually output~1 from that execution.
	We conclude from this that some honest party will eventually have $s \geq n-t_a$; furthermore, $t_a$-consistency of $\aba$ then implies that all honest parties will eventually have $s \geq n-t_a$. This means that all honest parties execute all $\{\aba_i\}_{i \in [n]}$, and by $t_a$-security of $\aba$ all those executions eventually terminate.
	Define 
	$\hat S^* \bydef \{i : 
	\mbox{some honest player outputs 1 in $\aba_i$}\}$.
	We claim that all executions $\{\bb_i\}_{i \in \hat S^*}$ eventually terminate. To see this, fix $i \in \hat S^*$. Then by $t_a$-validity of $\aba$, some honest party $P$ must have used input~1 to~$\aba_i$. But that implies that $\bb_i$ must have terminated for~$P$. So $t_a$-consistency of $\bb$ implies that $\bb_i$ will terminate for all honest parties.
	It follows that any honest party can take Exit~2 or~3. \qed
\end{proof}

\begin{lemma}\label{lem:acs_setquality}
	Fix $t_a \leq t_s$ with $t_a + 2\cdot t_s < n$. Then $\ACS$ has $t_a$-set quality.
\end{lemma}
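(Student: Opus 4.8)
The plan is a case analysis on which of the three exits the honest party $P$ used to produce its output $S$, exhibiting in each case at least $t_a+1$ honest parties whose inputs lie in $S$. (By $t_a$-liveness, Lemma~\ref{acs_liveness}, $P$ indeed takes one of the three exits.) The one tool I expect to use throughout is $t_s$-validity of $\bb$, legitimate here since at most $t_a\le t_s$ parties are corrupted: if $\bb_i$ has terminated at some honest party with output $v$ and $P_i$ is honest, then $v=v_i$ — apply $t_s$-validity to the honest sender $P_i$ to conclude that every honest party outputs $v_i$ from $\bb_i$, and use $t_a$-consistency to transfer this to whichever honest party saw the output. I will also record two numerical consequences of $t_a\le t_s$ and $t_a+2t_s<n$: since $3t_a\le t_a+2t_s<n$ we get $n-2t_a\ge t_a+1$, and since $t_s+2t_a\le t_a+2t_s<n$ we get $n-t_s-t_a\ge t_a+1$.

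\emph{Exits 1 and 3 (the easy cases).} If $P$ outputs $\{v\}$ via Exit~1, then $C_1(v)$ held, so at least $n-t_s$ of the executions $\{\bb_i\}_{i\in[n]}$ have output $v$ at $P$; at most $t_a$ of them have corrupted senders, so at least $n-t_s-t_a\ge t_a+1$ have honest senders $P_i$, each with $v_i=v\in S$. If instead $P$ outputs $S=\{v'_i\}_{i\in S^*}$ via Exit~3, then $C_3$ held, so $|S^*|=s\ge n-t_a$ and every $\bb_i$ with $i\in S^*$ has terminated at $P$; at most $t_a$ indices of $S^*$ are corrupted, so at least $s-t_a\ge n-2t_a\ge t_a+1$ of them have honest senders, and for each the recorded value $v'_i$ equals $v_i$. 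In both cases $S$ contains the inputs of at least $t_a+1$ honest parties.

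\emph{Exit 2 (the main obstacle).} Suppose $P$ outputs $\{v\}$ via Exit~2, so that $\neg C_1\wedge C_2(v)$ held. Then $s\ge n-t_a$ and a strict majority — at least $\lfloor s/2\rfloor+1$ — of the executions $\{\bb_i\}_{i\in S^*}$ have output $v$ at $P$, and after discarding the at most $t_a$ of these with corrupted senders we are left with at least $\lfloor s/2\rfloor+1-t_a$ honest parties whose input is $v$. What remains is to verify that $\lfloor s/2\rfloor+1-t_a\ge t_a+1$, using $s\ge n-t_a$ together with $t_a+2t_s<n$; this is the step I would spend the most care on, since it is where the precise interplay between "majority", the floor, and the two thresholds has to be pinned down — and, if the direct estimate falls short, where one would bring in $\neg C_1$ (which bounds by $n-t_s-1$ the number of $\bb$-executions that can have output $v$) and the fact that $S^*$ contains at least $s-t_a$ honest indices, in order to rule out the borderline configurations of $S^*$. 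Aside from establishing this inequality, the proof is routine bookkeeping.
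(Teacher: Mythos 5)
Your handling of Exits~1 and~3 is correct and coincides with the paper's argument: for Exit~1, at least $n-t_s-t_a>t_s\geq t_a$ of the executions outputting $v$ have honest senders, and for Exit~3 at least $s-t_a\geq n-2t_a>t_a$ indices of $S^*$ are honest, with $t_s$-validity of $\bb$ identifying each such execution's output with its sender's input.

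The step you flagged in Exit~2 is, however, a genuine gap, and it cannot be closed as stated. The inequality $\lfloor s/2\rfloor+1-t_a\geq t_a+1$ amounts to $s\geq 4t_a$, whereas the hypotheses only give $s\geq n-t_a\geq 2t_s+1\geq 2t_a+1$: with $n=7$, $t_a=t_s=2$, $s=5$, your count guarantees only $\lfloor 5/2\rfloor+1-2=1$ honest input, not $3$. Appealing to $\neg C_1$ does not rescue this, since $\neg C_1$ only \emph{upper}-bounds (by $n-t_s-1$) the number of executions outputting $v$, which is the wrong direction for lower-bounding the number of honest parties whose input is $v$; and the borderline configuration is realizable (two corrupted senders echo the value $v$ of a single honest party, the other honest parties hold distinct values, and the scheduler arranges $|S^*|=n-t_a$), so for Exit~2 only ``$S$ contains the input of at least one honest party'' is actually provable. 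For what it is worth, the paper's own proof treats Exits~2 and~3 with one counting claim (at least $n-2t_a>\max\{(n-t_a)/2,\,t_a\}$ honest indices in $S^*$); unpacked for Exit~2, that intersection argument likewise yields only a single honest input hiding in the majority value, which is the weaker guarantee the liveness proof of Theorem~\ref{smr_liveness} in fact uses. So your careful bookkeeping has surfaced a real looseness in the $t_a+1$ claim for this case rather than a trick you failed to find --- but as a proof of the lemma as stated, the proposal is incomplete.
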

\begin{proof}
	Consider some honest party $P$. Say $P$ takes Exit~1 and outputs~$S=\{v\}$. Then $P$ has seen at least $n-t_s$ executions $\{\bb_i\}$ terminate with output~$v$.
	Of these, at least $n-t_s-t_a > t_s \geq t_a$ must correspond to honest parties.
	By $t_s$-validity of $\bb$, those honest parties all had input~$v$.
	This means that $S$ contains the inputs of at least $t_a+1$ honest parties.
	
	Alternatively, say $P$ takes Exit~2 or~3 and outputs a set~$S$.
	Then $P$ must have $|S^*|\geq n-t_a$. At least
	\[n-2\cdot t_a > \max\{(n-t_a)/2, \; t_a  \}\]
	of the indices in $S^*$ correspond to honest parties, and $t_s$-validity of $\bb$ implies that for each of those parties
	the corresponding output value $v'_i$ that $P$ holds is equal to that party's input. Thus, regardless of whether $P$ takes Exit~2 (and $S$ contains the majority value output by $\{\bb_i\}_{i \in S^*}$) or Exit~3 (and $S$ contains every value output by $\{\bb_i\}_{i \in S^*}$), the set $S$ output by $P$ contains the inputs of at least $t_a+1$ honest parties. \qed
\end{proof}

\begin{theorem}\label{thm:acs_consistency}
	Fix $t_a, t_s$ with $t_a \leq t_s$ and $t_a + 2\cdot t_s < n$. Then $\ACS$ is $t_a$-secure and $t_s$-valid.
\end{theorem}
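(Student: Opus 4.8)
This statement is essentially a bookkeeping step that assembles the preceding lemmas, so the plan is to verify each clause of ``$t_a$-secure and $t_s$-valid'' against the results already proved. Recall that $t_a$-security for ACS unpacks into four properties: $t_a$-validity, $t_a$-liveness, $t_a$-consistency, and $t_a$-set quality. I would dispatch them in that order.

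First, $t_s$-validity is exactly Lemma~\ref{lem:acs_validity}, which applies since $t_a+2t_s<n$; and $t_a$-validity then comes for free, since an execution with at most $t_a$ corruptions is in particular an execution with at most $t_s$ corruptions (and $t_a\le t_s<n/2$), so the conclusion of Lemma~\ref{lem:acs_validity} still holds. Next, $t_a$-liveness is Lemma~\ref{acs_liveness}, and $t_a$-set quality is Lemma~\ref{lem:acs_setquality}, both of which are stated exactly for the parameter regime $t_a\le t_s$, $t_a+2t_s<n$. The only clause requiring a (trivial) combination of two results is $t_a$-consistency: the definition asks that, with at most $t_a$ corruptions, \emph{all} honest parties output the same set $S$. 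Lemma~\ref{acs_equal} shows that any two honest parties that produce output produce the \emph{same} set, and Lemma~\ref{acs_liveness} ($t_a$-liveness) guarantees that every honest party does in fact produce output; together these give a single common output set $S$, which is $t_a$-consistency.

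There is no real obstacle here—the work was done in Lemmas~\ref{lem:acs_validity}--\ref{lem:acs_setquality}—so the ``hard part'' is merely making sure the hypotheses line up: every invoked lemma is stated under $t_a\le t_s$ and $t_a+2t_s<n$, which are precisely the hypotheses of the theorem, and the implication $t_s$-validity $\Rightarrow$ $t_a$-validity is justified by $t_a\le t_s$. Concluding, $\ACS$ satisfies $t_a$-validity, $t_a$-liveness, $t_a$-consistency, and $t_a$-set quality, hence is $t_a$-secure, and it is additionally $t_s$-valid. \qed
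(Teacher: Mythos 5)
Your proposal is correct and matches the paper's own proof, which likewise assembles Lemma~\ref{lem:acs_validity} for $t_s$-validity, Lemmas~\ref{acs_equal} and~\ref{acs_liveness} together for $t_a$-consistency (and liveness), and Lemma~\ref{lem:acs_setquality} for $t_a$-set quality. Your added remark that $t_a$-validity follows from $t_s$-validity via $t_a \leq t_s$ is a harmless (and correct) elaboration the paper leaves implicit.
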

\begin{proof}
	Lemma~\ref{lem:acs_validity} proves $t_s$-validity. Lemmas~\ref{acs_equal} and \ref{acs_liveness} together prove $t_a$-consistency, and  Lemma~\ref{lem:acs_setquality} shows $t_a$-set quality. \qed
\end{proof}

\begin{lemma}\label{lem:acs:cc}
	Fix $t_a \leq t_s$ with $t_a + 2\cdot t_s < n$. Then $\ACS$ has bounded
	communication complexity under either of the following conditions:
	\begin{enumerate}
		\item At most $t_a$ parties are corrupted.
		\item At most $t_s$ parties are corrupted and all honest parties have the same input.
	\end{enumerate}
\end{lemma}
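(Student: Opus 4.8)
The plan is to bound, separately, the communication that each honest party incurs inside the reliable-broadcast executions $\{\bb_i\}_{i\in[n]}$ and inside the Byzantine-agreement executions $\{\aba_i\}_{i\in[n]}$, and then sum the two contributions; the two conditions of the lemma differ only in how the $\aba$-contribution is controlled.

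For the reliable-broadcast executions the bound is unconditional and essentially by inspection of $\BB$: in any single execution $\bb_i$ an honest party sends at most one $\echo$ message and at most one $\ready$ message, and this remains true after the party has produced its $\ACS$ output and is merely ``continuing to participate'' in $\bb_i$, and even if $\bb_i$ never terminates. Summed over the $n$ executions, reliable broadcast therefore contributes only $O(n)$ messages to each honest party's communication.

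For the Byzantine-agreement executions I would split on the two conditions. Under condition~(1) (at most $t_a$ corruptions) I would extend the argument in the proof of Lemma~\ref{acs_liveness}: either $C_1$ eventually holds for every honest party --- in which case, by the ``After outputting'' instructions, every honest party eventually stops participating in every $\aba_i$ --- or no honest party ever takes Exit~1, in which case (as shown there) every honest party eventually has $s \ge n - t_a$, hence starts all $n$ executions $\{\aba_i\}_{i\in[n]}$ with a valid input and keeps running them, so $t_a$-termination of $\aba$ makes all of them terminate for that party. In either case each honest party participates in at most $n$ executions $\aba_i$, each only up to the point of termination or abandonment, so its $\aba$-contribution is at most $n$ times the communication of $\aba$ in a single execution with at most $t_a$ corruptions, which is bounded for the BA protocol we use. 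Under condition~(2) (at most $t_s$ corruptions, all honest parties holding the same input $v$) I would instead reuse the observation from the proof of Lemma~\ref{lem:acs_validity}: the (at least $n-t_s$) executions $\bb_i$ with an honest sender all output $v$ for every honest party, so $C_1(v)$ eventually holds for every honest party, and, since $C_1$ is monotone, each honest party then permanently stops participating in every ongoing $\aba_i$. Hence, although $\aba$ is now run outside its security regime, every honest party still participates in at most $n$ executions $\aba_i$, each only for a bounded prefix, and the $\aba$-contribution is again bounded.

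The step I expect to be the main obstacle is condition~(2): since $t_s$ may exceed $n/3$ we cannot invoke any guarantee of $\aba$ --- it need not terminate and need not be consistent --- so the only reason honest parties do not send messages in the $\aba$ executions indefinitely is that $C_1(v)$ is forced to hold by the honest-sender reliable-broadcast executions; making this precise, and checking in particular that the $C_1$-triggered abandonment is irrevocable (so that a party cannot be dragged back into some $\aba_i$), is the crux. I would also note that ``bounded communication complexity'' is understood up to the sizes of the inputs and up to the (possibly only expected, or overwhelming-probability) communication guarantee inherited from the randomized subprotocol $\aba$; no deterministic worst-case bound independent of $\aba$ is claimed.
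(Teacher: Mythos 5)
Your proposal is correct and follows essentially the same route as the paper's proof: the reliable-broadcast executions are bounded by inspection, and the whole issue reduces to showing every honest party eventually stops participating in every $\aba_i$ --- via $t_a$-termination of $\aba$ (or $C_1$ becoming true for all, by $t_a$-consistency of $\bb$) under condition~(1), and via $t_s$-validity of $\bb$ forcing $C_1(v)$ for all honest parties under condition~(2). Your identification of condition~(2) as the case where no guarantee of $\aba$ is available, so that only the $C_1$-triggered abandonment saves boundedness, is exactly the point of the ``After outputting'' clause in Figure~\ref{prot:opt}.
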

\begin{proof}
	Because $\bb$ from the previous section has bounded communication complexity, we only need to show that all honest parties eventually stop participating in all $\aba$ executions. (This can occur either because those executions all terminate, or because honest parties all set $C_1=\true$ and stop participating in any still-running
	executions.)
	
	\medskip\noindent{\bf Case 1:} \emph{At most $t_a$ parties are corrupted.} 
	If some honest party~$P$ takes Exit~1 during an execution of $\ACS$, then $P$ must have seen at least $n-t_s$ executions $\{\bb_i\}_{i \in [n]}$ with the same output value. By $t_a$-consistency of $\bb$, all honest parties eventually see those executions output the same value, and thus set $C_1=\true$ and stop participating in any still-running $\aba$ executions.
	
	On the other hand, if no honest parties take Exit~1 during an execution of $\aba$, then all honest parties continue to participate in all $\aba$ executions. By $t_a$-termination of~$\aba$, each of those executions will terminate.

	\medskip\noindent{\bf Case 2:}  \emph{At most $t_s$ parties are corrupted and all honest parties have the same input~$v$.} 
	Because all honest parties have input~$v$, $t_s$-validity of $\bb$ implies that all honest parties receive output $v$ from at least $n-t_s$ executions of $\{\bb_i\}$.  So all honest parties will eventually set $C_1=\true$ and thus stop participating in any still-running $\aba$ executions. \qed
\end{proof}

\section{A Network-Agnostic SMR Protocol}\label{sec:SMR}

In this section, we show our main result: an SMR protocol that is $t_s$-secure in a synchronous network and $t_a$-secure in an asynchronous network. We begin in Section~\ref{sec:bla} by briefly introducing a useful primitive
called \emph{block agreement}. In Appendix~\ref{sec:bla-proof}, we construct a block-agreement protocol secure against $t<n/2$ parties in a synchronous network.
We then use our block-agreement protocol to construct an SMR protocol in Section~\ref{sec:SMR-main}.

\subsection{Block Agreement}
\label{sec:bla}

Block agreement is a form of agreement where (1)~in addition to an input, parties provide signatures (in a particular format) on those inputs, and (2)~a stronger notion of validity is required. 
Specifically, 
consider pairs consisting of a block $B$ along with a set~$\Sigma$ of signed buffers $\signed{\buf_j}{j}$.
(Recall that $\signed{m}{i}$ denotes a tuple
$(i,m,\sigma)$ such that $\sigma$ is a valid signature on~$m$
with respect to~$P_i$'s secret key.)
We say a pair $(B, \Sigma)$ is \emph{$t$-valid} if:
\begin{itemize}
	\item $\Sigma$ contains signed buffers from strictly more than $t$ distinct parties.
	\item For each $\signed{\buf_j}{j}\in\Sigma$, we have $\buf_j \subseteq B$. 
\end{itemize}
A pair is \emph{valid} if it is 0-valid (meaning it contains signed buffers from at least one party).

\begin{definition}[Block agreement]
	Let $\Pi$ be a protocol executed by parties $P_1, \ldots, P_n$,
	where each party $P_i$ begins holding a valid pair $(B_i, \Sigma_i)$ and parties terminate upon generating output.
	\begin{itemize}		
		
		\item {\bf Validity:} $\Pi$ is {\sf $t$-valid} if whenever at most $t$ of the parties are corrupted, every honest party
		outputs a valid pair.
		
		
		\item {\bf Termination:} $\Pi$ is {\sf $t$-terminating} if
		whenever at most $t$ of the parties
		are corrupted, every honest party terminates.
		
		\item {\bf Consistency:} $\Pi$ is {\sf $t$-consistent} if
		the following holds whenever at most $t$ of the parties
		are corrupted: for any $s \leq t$, if every honest party inputs an $s$-valid pair, 
		there is an $s$-valid $(B, \Sigma)$ such that every honest party outputs $(B, \Sigma)$.
	\end{itemize}
	If $\Pi$ is $t$-valid, $t$-consistent, and $t$-terminating, then we say it is {\sf $t$-secure}.
\end{definition}

We prove the following in Appendix~\ref{sec:bla-proof}.

\begin{theorem}\label{thm:exists-bla}
	There
	is a block-agreement protocol $\BLA$ that is $t$-secure for any $t<n/2$ when run in a synchronous network. Moreover, all honest parties terminate with probability $1-2^{-O(\kappa)}$ after time~$\kappa \cdot \Delta$.
\end{theorem}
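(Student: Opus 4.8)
The plan is to reduce block agreement to running $n$ parallel copies of a \emph{broadcast} primitive followed by a deterministic local combining rule. For the building block I would use any synchronous broadcast protocol $\mathsf{BC}$ that is $t$-secure for $t<n/2$ given a PKI and that, except with probability $2^{-\Omega(\kappa)}$, terminates within $O(\kappa)$ rounds; such protocols are known (and, since $O(\kappa)=o(n)$, necessarily randomized, e.g.\ via common coins bootstrapped from the setup). The protocol $\BLA$ then proceeds as follows: each $P_i$ serves as the sender in an instance $\mathsf{BC}_i$ with input $(B_i,\Sigma_i)$, all $n$ instances running concurrently; once every instance has terminated, each party holds a vector $(w_1,\dots,w_n)$, discards every entry $w_j$ that is not a valid (i.e.\ $0$-valid) pair, and outputs $(B,\Sigma)$ with $B=\bigcup B_j$ and $\Sigma=\bigcup \Sigma_j$, the unions taken over the surviving entries.

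\emph{Termination} holds because all $n$ broadcast instances halt within $O(\kappa)$ rounds except with probability $n\cdot 2^{-\Omega(\kappa)}=2^{-O(\kappa)}$ (using that $\kappa$, as a statistical parameter, dominates $\log n$, or by sharing coins across instances), while the combining step is purely local; choosing the round bound appropriately yields total running time $\kappa\cdot\Delta$. \emph{Consistency} holds because, by consistency of each $\mathsf{BC}_j$, all honest parties hold the same vector, hence the same post-filter vector, hence the same output. For the $s$-validity clause: by validity of $\mathsf{BC}_j$ for honest $P_j$, the entry $w_j$ equals $P_j$'s input, which is a valid pair and therefore survives the filter; so if every honest input is $s$-valid, then $\Sigma$ contains signed buffers from the $>s$ distinct parties that appear in that honest $\Sigma_j$, and for every $\signed{\buf_k}{k}\in\Sigma$ the entry $j$ it was drawn from is a valid pair, whence $\buf_k\subseteq B_j\subseteq B$; thus $(B,\Sigma)$ is $s$-valid, and the special case $s=0$ gives \emph{validity}.

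The step I expect to be the real work is instantiating the sub-protocol: one needs synchronous broadcast that simultaneously tolerates the optimal $t<n/2$, accepts arbitrary (non-binary) sender inputs — either a multivalued broadcast directly, or a multivalued-to-binary reduction that still works at this threshold — and meets the stated probabilistic round complexity, which is what forces randomization. A smaller but still necessary point is the handling of corrupted senders in the combining rule: a corrupted $P_j$ can make $\mathsf{BC}_j$ output a malformed ``pair'' (empty, or containing a signed buffer not contained in $B_j$), so filtering to valid pairs before taking unions is exactly what guarantees both that $\buf_k\subseteq B$ for every retained signed buffer and that the combining rule is well-defined; since adding further valid entries from corrupted parties only enlarges $B$ and the set of distinct signers, it can never break $s$-validity, so no extra care is needed there.
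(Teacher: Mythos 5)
Your reduction is correct, but it takes a genuinely different route from the paper. The paper builds $\BLA$ directly in the style of a synod protocol: a propose subprotocol $\prp$ run in parallel for all $n$ potential proposers, a graded-consensus wrapper $\GC$ that uses an atomic leader-election oracle $\leader{}$ to single out one proposer's output and certify it via $\mathsf{Commit}$/$\mathsf{Notify}$ messages, and $\kappa$ sequential iterations, each terminating all parties with probability at least $1/2$ (when the elected leader is honest); preservation of $s$-validity across iterations rests on a vote/certificate invariant (Lemmas~\ref{lem:prop:cons}, \ref{lem:gc:val2}--\ref{lem:gc:live}). You instead run $n$ parallel multivalued broadcasts and take a filtered union, and your combining argument is sound: broadcast consistency gives identical post-filter vectors; filtering to valid pairs guarantees $\buf_k \subseteq B_j \subseteq B$ for every retained signed buffer; and the surviving honest entries alone supply more than $s$ distinct signers, so additional valid adversarial entries (including multiple signed buffers from one corrupted signer) cannot break $s$-validity. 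The trade-off is where the difficulty lives. Your route buys a short, easily checked combining proof, but the primitive you assume---synchronous multivalued broadcast secure for $t<n/2$ under a PKI, terminating in $O(\kappa)$ rounds except with probability $2^{-\Omega(\kappa)}$---is itself built (e.g., in Katz--Koo~\cite{DBLP:journals/jcss/KatzK09}) from essentially the same gradecast-plus-leader-election-plus-iteration machinery that the paper's $\BLA$ implements directly; the paper only imports the much lighter leader-election oracle from that toolbox. You also incur $n$ parallel broadcasts of full buffers rather than one agreement, and you should state explicitly what a party does at the round cutoff if some instance has not yet terminated (outputting the filtered union of the terminated instances is fine, since that event already lies inside the permitted $2^{-\Omega(\kappa)}$ failure probability, but the branch must be specified rather than left implicit). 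None of this is a gap---the required subprotocol exists and you correctly flag its instantiation as the real work---but the reduction relocates the difficulty more than it removes it.
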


\subsection{State Machine Replication}
\label{sec:SMR-main}
We now combine our various sub-protocols to realize network-agnostic SMR.
At a high level, our SMR protocol $\SMR$ (see Figure~\ref{prot:smr}) proceeds as follows. For each slot~$j$, the parties attempt to reach agreement on a block using the block-agreement protocol $\BLA$. If that protocol terminates, parties use its output $B$ as input to our ACS protocol~$\ACS$. 
If $\BLA$ fails to terminate after a sufficiently long time, parties abandon it and instead attempt to reach agreement using the ACS protocol directly.

By setting the timeout appropriately, we can ensure that in a synchronous network $\BLA$ terminates with overwhelming probability. Thus, 
if the network is synchronous and at most $t_s$ parties are corrupted, all parties agree on their input $B$ to~$\ACS$, and $t_s$-validity of $\ACS$ ensures that all parties output~$B$.
On the other hand, if the network is asynchronous and at most $t_a$ parties are corrupted, then 
$t_a$-security of $\ACS$ ensures agreement.

\protocol{Protocol $\SMR$}{A protocol for state machine replication.}{prot:smr}{We describe the protocol
	from the point of view of party $P_i$ holding a set~$\buf_i$ that grows asynchronously via some external process.
	
	\smallskip
	For $k=1, \ldots$, do the following starting at time $T_k:=(\Delta+\kappa\Delta)\cdot (k-1)$:
	\begin{enumerate}
		\item Set $\epoch{i}{k}:=1$, and initialize $B:=\emptyset,\Sigma:=\emptyset$.
		\item Send $\signed{\buf_i}{i}$ to every party.
		\item While $|\Sigma|\leq t_s$:
		\begin{itemize} 
			\item The first time $\signed{\buf_j}{j}$ is received from $P_j$, set $B:=B \cup \buf_j$ and $\Sigma:=\Sigma \cup \{\signed{\buf_j}{j}\}$. 
		\end{itemize}
		\item At time $T_k+\Delta$, run $\BLA$ on input $(B,\Sigma)$. 
		\item If $\BLA$ produces $t_s$-valid output, let $(B^*,\Sigma^*)$ denote that output. Otherwise, at time $T_k+\Delta+\kappa \cdot \Delta$ set $(B^*,\Sigma^*):=(B,\Sigma)$.
		\item Run  $\sblock\gets\ACS$ using input $B^*$.
		\item Set $\block{i}{k}:=\bigcup_{\hat B \in \sblock} \hat B$. Set $\buf_i:=\buf_i\setminus\block{i}{k}.$
	\end{enumerate}
}

We note that 
$\ACS$ does not guarantee termination. Given that any SMR protocol must run indefinitely, however, this seems reasonable, especially since $\ACS$ has bounded communication complexity when run in the context of~$\SMR$ (cf.\ Lemma~\ref{lem:acs:cc} and the proofs below).

We now prove security of $\SMR$ in a network-agnostic setting.


\begin{theorem}[Consistency]\label{smr_consistency}
	Fix $t_a, t_s$ with $t_a < n/3$ and $t_a + 2\cdot t_s < n$. Then $\SMR$ is $t_a$-consistent when run in an asynchronous network, and $t_s$-consistent when run in a synchronous network.
\end{theorem}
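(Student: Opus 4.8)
The plan is to reduce consistency of $\SMR$ to the statement that, in every slot~$k$, all parties that remain honest obtain the same output from the $k$-th instance of $\ACS$ invoked in step~6. Indeed, by step~7 the block $\block{i}{k}$ output by an honest $P_i$ equals $\bigcup_{\hat B\in\sblock}\hat B$, where $\sblock$ is the set $P_i$ received from that $\ACS$ instance, so agreement on $\sblock$ yields agreement on the block. First I would check that every party still honest actually reaches step~6 and participates in the $k$-th $\ACS$ instance. Passing the loop in step~3 requires receiving signed buffers from strictly more than $t_s$ distinct parties; since $t_a+2t_s<n$ gives $n-t_a>t_s$ and $t_s<n/2$ gives $n-t_s>t_s$, in either network setting there are more than $t_s$ honest parties whose signed buffers are received (by time $T_k+\Delta$ in the synchronous case, eventually in the asynchronous case). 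And step~5 never blocks a party indefinitely: it either adopts $\BLA$'s output or falls back once the party's clock reaches $T_k+\Delta+\kappa\Delta$. Hence all honest parties run the $k$-th $\ACS$ instance.

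For the asynchronous case with at most $t_a$ corruptions, consistency is then immediate from $\ACS$: by $t_a$-liveness (Lemma~\ref{acs_liveness}) every honest party produces an output, and by $t_a$-consistency (Lemma~\ref{acs_equal}) these outputs coincide; this holds no matter what (possibly differing) inputs the honest parties fed to $\ACS$, which matters since $\BLA$ offers no guarantees over an asynchronous network.

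The synchronous case with at most $t_s$ corruptions is the heart of the argument, and the point is that $\ACS$ is \emph{not} $t_s$-consistent, so agreement must instead be funneled through $\BLA$ and then amplified by the $t_s$-\emph{validity} of $\ACS$. Here all honest parties are synchronized and so begin $\BLA$ together at time $T_k+\Delta$; by that time the step-3 loop has completed for every honest party (within $\Delta$ it receives the signed buffers of all $\geq n-t_s>t_s$ honest parties), so each honest party's input to $\BLA$ is a $t_s$-valid pair. By Theorem~\ref{thm:exists-bla}, $\BLA$ is $t_s$-secure in a synchronous network and, except with probability $2^{-O(\kappa)}$, terminates within time $\kappa\cdot\Delta$; combining $t_s$-validity with $t_s$-consistency of $\BLA$ (taking $s=t_s$), all honest parties therefore hold, by the step-5 deadline $T_k+\Delta+\kappa\Delta$, one and the same $t_s$-valid pair $(B^*,\Sigma^*)$, and hence all input $B^*$ to the $k$-th instance of $\ACS$. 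Now $t_s$-validity of $\ACS$ (Theorem~\ref{thm:acs_consistency}) forces every honest party to output $\{B^*\}$, so $\block{i}{k}=B^*$ for every party still honest.

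The only source of error is that, in the synchronous case, with probability $2^{-O(\kappa)}$ per slot $\BLA$ fails to terminate in time, honest parties fall back to possibly different pairs, and $\ACS$ (only $t_a$-consistent) need not produce agreement; this negligible slack is exactly what the paper's convention of achieving properties with overwhelming probability permits. The main obstacle is the synchronous analysis just sketched: one must verify that the timing built into steps~3--5 guarantees a common, $t_s$-valid input to $\ACS$, so that its higher-threshold \emph{validity} — rather than its lower-threshold consistency, which is unavailable here — delivers consistency of $\SMR$.
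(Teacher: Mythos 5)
Your proposal is correct and follows essentially the same route as the paper: in the synchronous case, the step-3 loop gives every honest party a $t_s$-valid input to $\BLA$, whose $t_s$-consistency yields a common $t_s$-valid pair $(B^*,\Sigma^*)$, and $t_s$-validity of $\ACS$ then forces all honest parties to output $\{B^*\}$; in the asynchronous case, $t_a$-consistency (together with liveness) of $\ACS$ suffices directly. Your additional bookkeeping — that every honest party actually reaches the $\ACS$ invocation and the explicit accounting of the $2^{-O(\kappa)}$ failure of $\BLA$ — only makes explicit what the paper leaves implicit.
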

\begin{proof} Assume first that at most $t_s$ parties are dishonest and the network is synchronous. In any slot~$k$, 
	each honest party receives $\signed{\buf_j}{j}$ from at least the $n-t_s>t_s$ honest parties, and the input $(B, \Sigma)$ they use to $\BLA$ is $t_s$-valid.
	Consistency of $\BLA$ implies that every honest party outputs the same $t_s$-valid pair $(B^*, \Sigma^*)$ after running $\BLA$ for time $\kappa \cdot \Delta$.
	By $t_s$-validity of $\ACS$, this means every honest
	party obtains output~$\{B^*\}$ from $\ACS$ and then sets $\block{}{k}=B^*$.
	
	If at most $t_a$ parties are dishonest and the network is asynchronous, then
	$t_a$-consistency of $\ACS$ implies that
	all honest parties agree on the same value $\sblock$, and hence set $\block{}{k}$ to the same value.
	\qed \end{proof}

\begin{theorem}[Strong liveness]\label{smr_liveness}
	Fix $t_a \leq t_s$ with $t_a + 2\cdot t_s < n$. Then $\SMR$ is $t_a$-live when run in an asynchronous network, and $t_s$-live when run in a synchronous network.
\end{theorem}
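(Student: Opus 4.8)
The plan is to treat the two network settings separately --- synchronous with at most $t_s$ corruptions, asynchronous with at most $t_a$ corruptions --- after a common reduction. First, in either setting, I reduce to showing $\tx$ is placed in slot~$j$ itself: if some party that remains honest outputs a block containing $\tx$ in a slot $k<j$, then by Theorem~\ref{smr_consistency} ($t_s$-consistency in the synchronous case, $t_a$-consistency in the asynchronous case) so does every party that remains honest, and we are done. Otherwise, since a transaction leaves $\buf_i$ only when it is placed in a block (step~7), every honest $P_i$ still holds $\tx\in\buf_i$ when it enters epoch~$j$, i.e.\ when it executes step~2 of iteration~$j$, so the signed buffer $\signed{\buf_i}{i}$ it broadcasts there contains~$\tx$. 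I will repeatedly use the following observation: by unforgeability and domain separation, a signed buffer attributed to an honest party and used as a slot-$j$ buffer is genuinely that party's buffer at the start of iteration~$j$ and hence contains~$\tx$; since at most $t_s$ parties are corrupt in either setting, any $t_s$-valid pair $(B,\Sigma)$ --- whose $\Sigma$ has more than $t_s$ signatories, each of whose buffer lies in $B$ --- contains the signed buffer of at least one honest party, and therefore satisfies $\tx\in B$.

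For the synchronous case, synchrony guarantees that by time $T_j+\Delta$ every honest party has received the signed buffers of all $n-t_s>t_s$ honest parties, so its locally assembled $(B,\Sigma)$ is $t_s$-valid with $\tx\in B$, and thus every honest party inputs a $t_s$-valid pair containing $\tx$ to $\BLA$. By Theorem~\ref{thm:exists-bla}, except with probability $2^{-O(\kappa)}$ all honest parties terminate $\BLA$ within the timeout $\kappa\Delta$, and then $t_s$-consistency and $t_s$-validity of $\BLA$ produce a common $t_s$-valid output $(B^*,\Sigma^*)$, which by the observation above has $\tx\in B^*$. Every honest party then runs $\ACS$ on the common input $B^*$, so $t_s$-validity of $\ACS$ (Theorem~\ref{thm:acs_consistency}) forces the common output $\{B^*\}$, whence $\block{}{j}=B^*\ni\tx$.

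For the asynchronous case the argument runs through $\ACS$ directly, since $\BLA$ need not terminate and, even if it does, its output carries no guarantees. Each honest $P_i$ eventually collects more than $t_s$ signed slot-$j$ buffers --- the more than $t_s$ honest parties all broadcast theirs --- so its step-3 loop terminates; as at most $t_a\le t_s$ parties are corrupt, at least one of these buffers is from an honest party and hence contains~$\tx$, giving $\tx\in B_i$. Consequently, whether $P_i$ then adopts a $t_s$-valid $\BLA$ output (which by the observation must contain $\tx$) or times out with $B^*_i=B_i$, in all cases $\tx\in B^*_i$ for every honest~$P_i$. Now $t_a$-security of $\ACS$ (Theorem~\ref{thm:acs_consistency}) yields a common output $\sblock$, and its $t_a$-set quality guarantees that $\sblock$ contains the input $B^*_i$ of at least $t_a+1\ge 1$ honest parties; fixing such an~$i$, $\tx\in B^*_i\subseteq\bigcup_{\hat B\in\sblock}\hat B=\block{}{j}$.

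The step I expect to be the main obstacle is the asynchronous case. Because there is no common $\ACS$ input, one cannot route $\tx$ into the block via $\ACS$-validity as in the synchronous case; instead one needs \emph{both} that every honest party's private input $B^*_i$ already contains $\tx$ \emph{and} that the $\ACS$ output is forced to reflect at least one such input. The first relies on the fact that step~3 assembles its ($t_s$-valid) pair from strictly more than $t_s$ signed buffers --- which, since at most $t_s$ parties are corrupt, must include an honest party's buffer and hence $\tx$, with no appeal to timing --- and the second is exactly what the $t_a$-set quality of $\ACS$ supplies. A secondary technical point, used throughout, is the careful use of signature unforgeability and domain separation to identify a signed buffer attributed to an honest party with that party's genuine slot-$j$ buffer.
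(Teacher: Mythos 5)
Your proposal is correct and follows essentially the same route as the paper's proof: reduce to the case where $\tx$ is still in every honest buffer at the start of epoch~$j$, observe that any $t_s$-valid pair must carry an honest signed buffer and hence contain~$\tx$, and then conclude via $t_s$-validity of $\ACS$ in the synchronous case and via $t_a$-consistency together with set quality in the asynchronous case. You are somewhat more explicit than the paper about the termination of the step-3 collection loop and about the subcase where $\BLA$ happens to produce $t_s$-valid output in an asynchronous network, but these are refinements of the same argument rather than a different approach.
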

\begin{proof} 
	By consistency of $\SMR$, we can refer to the values of $\block{}{i}$ without specifying any particular party. 
	Consider some transaction $\tx$ that every honest party received before entering epoch~$k$. If $\tx$ appears in $\block{}{k'}$ for some $k' < k$ then we are done. Otherwise, every honest party has $\tx$ in their buffer when they enter epoch~$k$. We show that in this latter case, $\tx$ is in $\block{}{k}$.
	
	
	Assume at most $t_s$ parties are corrupted and the network is synchronous. 
	Reasoning as in the proof of Theorem~\ref{smr_consistency}, every honest party outputs the same $t_s$-valid pair $(B^*, \Sigma^*)$ after running $\BLA$ for time~$\kappa \cdot \Delta$, and sets $\block{}{k}=B^*$.
	Since $(B^*, \Sigma^*)$ is $t_s$-valid, $\Sigma^*$ must contain a signature on a subset of $B^*$ from at least one honest party. But an honest party would have only signed a subset that includes~$\tx$, implying $\tx \in B^*$.

	Consider next the case where at most $t_a$ parties are dishonest and the network is asynchronous. 
	Every honest party $P_i$ runs $\ACS$ using an input $B^*_i$ for which they have a $t_s$-valid pair $(B^*_i, \Sigma^*_i)$. Arguing as above, each $B^*_i$ must contain~$\tx$. By $t_a$-security of $\ACS$, all honest parties output the same set $\sblock$ that contains $B^*_i$ for some honest party~$P_i$, and hence contains~$\tx$. It follows that every honest party includes $\tx$ in $\block{}{k}$.
	\qed
\end{proof}

\begin{theorem}[Completeness] Fix $t_a, t_s$ with $t_a < n/3$ and $t_a + 2\cdot t_s < n$. 
	Then $\SMR$ is $t_a$-complete when run in an asynchronous network, and $t_s$-complete when run in a synchronous network.	
\end{theorem}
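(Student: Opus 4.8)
The plan is to show that in every slot $k$ each honest party eventually reaches step~7 of $\SMR$ and writes a block to $\block{}{k}$. Since the iterations of $\SMR$ are scheduled at the fixed local times $T_k$ rather than sequentially after the previous iteration completes, it suffices to argue that, for each fixed $k$, the invocation of $\ACS$ in step~6 eventually produces output; once it does, step~7 writes a block to $\block{i}{k}$. I would split into the two network regimes, mirroring the case analyses in Theorems~\ref{smr_consistency} and~\ref{smr_liveness}.

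First, for a synchronous network with at most $t_s$ corruptions: as in the proof of Theorem~\ref{smr_consistency}, by time $T_k+\Delta$ every honest party has received $\signed{\buf_j}{j}$ from the $\geq n-t_s>t_s$ honest parties, so it leaves the loop in step~3 holding a $t_s$-valid (hence valid) pair $(B,\Sigma)$ and starts $\BLA$ at time $T_k+\Delta$. By Theorem~\ref{thm:exists-bla} (using that the definition of security permits guarantees that hold with overwhelming probability), within time $\kappa\cdot\Delta$ every honest party terminates $\BLA$ with a common $t_s$-valid output $(B^*,\Sigma^*)$, so by time $T_k+\Delta+\kappa\Delta$ all honest parties run $\ACS$ in step~6 on the same input $B^*$. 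Since $t_a+2t_s<n$, Lemma~\ref{lem:acs_validity} applies and every honest party outputs $\{B^*\}$, hence proceeds to step~7 and writes a block to $\block{}{k}$.

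Next, for an asynchronous network with at most $t_a$ corruptions: here I would first observe that $n-t_a>2t_s\geq t_s$ (from $t_a+2t_s<n$), so there are strictly more than $t_s$ honest parties. Because messages are eventually delivered, every honest party eventually receives $\signed{\buf_j}{j}$ from more than $t_s$ distinct (honest) parties, leaves the loop in step~3 with a $t_s$-valid pair $(B,\Sigma)$, and runs $\BLA$. Regardless of $\BLA$'s behavior in an asynchronous network, the timeout in step~5 (measured on the party's local clock) guarantees that the party eventually fixes some block $B^*$ --- either the block component of a $t_s$-valid output of $\BLA$, or (via the step-5 timeout) the $B$ from the $t_s$-valid pair $(B,\Sigma)$ it holds --- and starts $\ACS$ in step~6 on that input. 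Thus every honest party eventually participates in the slot-$k$ execution of $\ACS$; since at most $t_a$ parties are corrupted, $t_a$-liveness of $\ACS$ (Lemma~\ref{acs_liveness}) guarantees that every honest party obtains output, proceeds to step~7, and writes a block to $\block{}{k}$. (Lemma~\ref{lem:acs:cc} additionally ensures the concurrent $\ACS$ executions do not accumulate unboundedly.)

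The main obstacle to a fully rigorous argument is the low-probability event (probability $2^{-O(\kappa)}$ per slot) that $\BLA$ fails to terminate within $\kappa\Delta$ in the synchronous regime: in that case honest parties may feed inconsistent inputs to $\ACS$, which is only $t_s$-valid and not $t_s$-live, so it need not produce output. I would handle this exactly as the definitions permit, stating completeness as holding with overwhelming probability --- over any polynomial-length prefix of the (infinite) execution this remains overwhelming after a union bound over slots. The argument also relies on the assumption, implicit in the timeouts $T_k$, $T_k+\Delta$, $T_k+\Delta+\kappa\Delta$ of Figure~\ref{prot:smr}, that each party has a local clock that makes progress even in the asynchronous case, which I would make explicit. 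Everything else is routine bookkeeping combining Lemmas~\ref{lem:acs_validity} and~\ref{acs_liveness} with Theorem~\ref{thm:exists-bla}.
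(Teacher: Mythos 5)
Your proof is correct and follows essentially the same route as the paper's (much terser) argument: output in slot $k$ is equivalent to the slot-$k$ $\ACS$ execution producing output, which follows from $t_a$-liveness of $\ACS$ in the asynchronous case and from $\BLA$-consistency plus $t_s$-validity of $\ACS$ in the synchronous case. Your additional remarks about the $2^{-O(\kappa)}$ failure probability of $\BLA$ and the local-clock assumption are points the paper leaves implicit, not deviations in approach.
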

\begin{proof} 
	By inspection of $\SMR$, a party outputs a block in slot~$k$ iff its execution of $\ACS$ in iteration~$k$ produces output. So if at most $t_a$ parties are corrupted, completeness follows from $t_a$-liveness of~$\ACS$. If at most $t_s$ parties are corrupted and the network is synchronous, then consistency of $\BLA$ implies that all honest parties run $\ACS$ using the same input; completeness then follows from  $t_s$-validity of~$\ACS$.
	\qed
\end{proof}

\section{Optimality of Our Thresholds}
\label{sec:lower-bound}
In this section we show that the parameters achieved by our SMR protocol are optimal. 
This extends the analogous result by Blum et al.~\cite{BKL19}, who consider the case of BA. We remark that, although SMR is generally viewed as a stronger form of consensus than BA, it is unclear whether SMR generically implies BA in a network-agnostic setting, and we were not able to show such a result for the corruption thresholds of interest (namely, when $t_a + 2t_s \geq n)$. We thus need to prove impossibility directly.

\begin{lemma}
	Fix $t_a, t_s, n$ with $t_a + 2t_s \geq n$. If an $n$-party SMR protocol is $t_s$-live 
	in a synchronous
	network, then it cannot also be $t_a$-consistent in an
	asynchronous network.
\end{lemma}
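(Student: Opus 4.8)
The plan is to adapt the split-world (indistinguishability) argument of Blum et al.~\cite{BKL19} for Byzantine agreement to the SMR setting. Suppose toward a contradiction that some SMR protocol $\Pi$ is simultaneously $t_s$-live in a synchronous network and $t_a$-consistent in an asynchronous network. Since $t_a + 2 t_s \ge n$ and $t_a \le t_s$, we may partition $\{P_1, \ldots, P_n\}$ into three groups $S_0, S_1, S_2$ with $|S_0| \le t_s$, $|S_1| \le t_s$, and $|S_2| \le t_a$, taking $S_0$ and $S_1$ nonempty (possible whenever $n \ge 2$). Fix two distinct transactions $\tx_0 \neq \tx_1$, sampled uniformly from a transaction space of super-polynomial size, and fix parties $P \in S_0$ and $Q \in S_1$. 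I would then set up three executions.

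In World~A the network is synchronous and the adversary corrupts $S_1$, instructing those parties to send no messages at all; every honest party (i.e., every party in $S_0 \cup S_2$) holds $\tx_0$ in its buffer before entering epoch~$1$ and never receives $\tx_1$. Since $|S_1| \le t_s$, $t_s$-liveness of $\Pi$ guarantees that every honest party outputs a block containing $\tx_0$ in slot~$1$ (the first slot, entered in epoch~$1$); in particular $P$ outputs some block $B_0 \ni \tx_0$ in slot~$1$, and because the entire execution of World~A is independent of $\tx_1$, with overwhelming probability $\tx_1 \notin B_0$. World~B is symmetric, with the roles of $S_0$ and $S_1$ swapped and $\tx_1$ playing the role of $\tx_0$: the adversary corrupts and silences $S_0$, every honest party holds $\tx_1$ before epoch~$1$, and by $t_s$-liveness $Q$ outputs some block $B_1 \ni \tx_1$ in slot~$1$.

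In World~C the network is asynchronous and the adversary corrupts $S_2$ (note $|S_2| \le t_a$); parties in $S_0$ hold $\tx_0$ before epoch~$1$ (and never receive $\tx_1$), and parties in $S_1$ hold $\tx_1$ before epoch~$1$ (and never receive $\tx_0$). The adversarial scheduler delivers all messages internal to $S_0$, internal to $S_1$, between $S_0$ and $S_2$, and between $S_1$ and $S_2$ as promptly as in a synchronous run, but delays every message between $S_0$ and $S_1$ until after all honest parties have produced their slot-$1$ outputs (a legal asynchronous schedule, since those messages are eventually delivered). The corrupted parties in $S_2$ run two internal simulations: $\sigma_0$, in which they play the honest $S_2$-parties of World~A --- holding $\tx_0$, fed the real messages arriving from $S_0$, seeing nothing from $S_1$, and delivering their outgoing messages to $S_0$ only --- and $\sigma_1$, which does the same towards $S_1$ for World~B. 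The crucial observation is that in World~A the parties of $S_1$ are silent, so what $S_0$ and $S_2$ experience there is exactly what they experience in World~C; hence $P$'s view in World~C is distributed identically to its view in World~A, and symmetrically $Q$'s view in World~C is distributed identically to its view in World~B. Consequently, in World~C, $P$ outputs $B_0$ in slot~$1$ and $Q$ outputs $B_1$ in slot~$1$. By $t_a$-consistency in the asynchronous network, since $P$ is honest and outputs $B_0$ in slot~$1$ and $Q$ remains honest, $Q$ must also output $B_0$ in slot~$1$; together with $Q$ outputting $B_1$ in slot~$1$ this forces $B_0 = B_1$, whence $\tx_1 \in B_1 = B_0$, contradicting $\tx_1 \notin B_0$.

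I expect the main obstacle to be making the indistinguishability in World~C fully rigorous: one must check that the corrupted parties in $S_2$ can consistently simulate the honest $S_2$-parties of Worlds~A and~B --- this works precisely because the corrupted parties are silent in those two worlds, so no relaying of messages between $S_0$ and $S_1$ is ever required --- and that the asynchronous scheduler can reproduce the message-delivery timing of the synchronous runs towards $S_0$ and $S_1$, which matters because $\Pi$'s behaviour may depend on elapsed time. A secondary subtlety, specific to SMR and absent from the BA argument of~\cite{BKL19}, is that the SMR definition imposes no ``validity'' constraint on the contents of a block, so the equality $B_0 = B_1$ is not contradictory on its own; this is why the argument uses two freshly sampled transactions together with the fact that a transaction that is never provided as input to any honest party (and never used by the silent corrupted parties) cannot appear in an output block except with negligible probability.
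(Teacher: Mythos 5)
Your proposal is correct and follows essentially the same split-world argument as the paper: partition the parties into $S_0, S_1$ (each of size at most $t_s$) and a corruptible middle set of size at most $t_a$, silence one side in each synchronous world to invoke $t_s$-liveness on a freshly sampled transaction, and then stitch the two views together in an asynchronous execution where the middle set plays both roles and cross-messages are delayed, contradicting $t_a$-consistency because each side's slot-1 block contains its own transaction but (with overwhelming probability) not the other's. The paper packages the two simulations as a single experiment with ``virtual copies'' of the middle set, but this is only a presentational difference.
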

\begin{proof}
	Assume
	$t_a+2t_s=n$ and fix an SMR protocol~$\Pi$. Partition the $n$
	parties into sets $S_0, S_1, S_a$ where $|S_0|=|S_1|=t_s$ and
	$|S_a|=t_a$, and consider the following experiment:
	\begin{itemize}
		\item Choose uniform $m_0, m_1 \in \bool^\kappa$.
		\item Parties in $S_b$ begin running $\Pi$ at global time~0 with their buffers containing only~$m_b$.
		All communication between parties in $S_0$ and parties in $S_1$ is blocked (but all other messages are delivered within time~$\Delta$).
		\item Create virtual copies of each party in $S_a$, call
		them $S_a^0$ and $S_a^1$. Parties in $S_a^b$ begin running $\Pi$ (at global time $0$)
		with their buffers containing only~$m_b$, and communicate only with each other and
		parties in~$S_b$. 
	\end{itemize}
	Consider an execution of $\Pi$ in a synchronous
	network where parties in $S_1$ are corrupted and simply
	abort. Uniform $m_0, m_1 \in \bool^\kappa$ are chosen, and the remaining (honest) parties start with their buffers containing only~$m_0$.
	The views of the honest parties in this execution are
	distributed identically to the views of $S_0 \cup
	S_a^0$ in the above experiment. In particular,
	$t_s$-liveness of~$\Pi$ implies that, in the above experiment,
	all parties in $S_0$ include~$m_0$ in $\blocks[1]$. Moreover, since parties in $S_0$ have no information about~$m_1$, they include $m_1$ in $\blocks[1]$ with negligible probability.
	Analogously, all parties in $S_1$ include~$m_1$ in $\blocks[1]$ but include $m_0$ in $\blocks[1]$ with negligible probability.
	
	Next consider an execution of $\Pi$ in an asynchronous
	network where parties in $S_a$ are corrupted, and run
	$\Pi$ with their buffers containing~$m_0$ when interacting with~$S_0$ while running $\Pi$ with their buffers containing $m_1$ when interacting with~$S_1$.
	Moreover, all communication between the
	(honest) parties in $S_0$ and $S_1$ is delayed
	indefinitely. The views of the honest parties here are distributed identically to the views of
	$S_0 \cup S_1$ in the above experiment, yet the conclusion of the preceding
	paragraph shows that
	$t_a$-consistency is violated with overwhelming probability. \qed
\end{proof}

\section*{Acknowledgments}
Work supported in part under financial assistance award 70NANB19H126 from the U.S. Department of Commerce, National Institute of Standards and Technology, and NSF award~\#1837517.

\def\shortbib{0}

\appendix
\section{SMR Implies Weak BA}
\label{sec:SMR-BA}

\def\SMR{\Pi_{\sf SMR}}
We briefly discuss how SMR relates to BA. Specifically, we show that SMR implies \emph{weak}~BA. A weak BA protocol $\Pi$ satisfies validity and consistency as in Definition~\ref{def:BA}, but instead of termination it achieves a weaker liveness property. Namely, we say that
$\Pi$ is {\sf $t$-live} if whenever
at most $t$ parties are corrupted, every honest
party outputs a value in~$\bool$ (but may not terminate).

In Figure~\ref{prot:ba_smr} we show how to use an SMR protocol $\SMR$ to achieve weak~BA.

%
%
%


\protocol{Protocol $\Pi^{t_s}_{\mathsf{WBA}}$}{A protocol for weak Byzantine agreement, parameterized by~$t_s$.}{prot:ba_smr}{We describe the protocol
	from the point of view of a party $P_i$ with input~$v_i$.

	\smallskip
	\begin{itemize}
		\item Set $V_i:=\buf_i:=\emptyset$.
		\item Send $\signed{v_i}{i}$ to every party. 
		Upon receiving 
		$\signed{v_j}{j}$  from party $P_j$, set
		$\buf_i:=\buf_i\cup\{\signed{v_j}{j}\}$.
		\item Begin to run $\SMR$ at time $\Delta$.
		\item Upon outputting a block $B=\block{i}{k}$ do:
		for all $j$ such that $B$ contains $\signed{v_j}{j}$ and there is no pair $(\star, j)$ in $V$, add $(v_j, j)$ to~$V$.
		
		\item If at any point during the execution $|V_i|\geq n-t_s$, then output the majority value among all values in~$V_i$.
	\end{itemize}
}

\def\WBA{\Pi_{\sf WBA}^{t_s}}

\begin{lemma}[Validity and liveness] \label{lem:bla:validity} Let $t_a + 2t_s < n$. If $\SMR$ is $t_s$-live in a synchronous network (resp., $t_a$-live in an asynchronous network), then $\WBA$ is $t_s$-valid and $t_s$-live in a synchronous network (resp., $t_a$-valid and $t_a$-live in an asynchronous network). 
\end{lemma}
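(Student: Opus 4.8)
The plan is to establish liveness and validity together, in both network models, using only the strong-liveness ($t$-liveness) guarantee of $\SMR$ that the lemma assumes --- plus, for the synchronous case, the harmless convention from the remark after the SMR definition that $\SMR$ outputs blocks in slot order. The one observation I would isolate first is: by unforgeability of signatures, the only pairs that can ever enter an honest $P_i$'s set $V_i$ are $(v_j,j)$ with $P_j$ honest, in which case $v_j$ equals $P_j$'s true input, or $(b,j)$ with $P_j$ corrupted and $b$ arbitrary; moreover at most one pair is recorded per index~$j$. Hence at every point the number of pairs in $V_i$ carrying a value other than an honest input is bounded by the number of corrupted parties.

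\textbf{Liveness.} Each honest $P_j$ sends $\signed{v_j}{j}$ at the outset, so in a synchronous network every honest party has all $\ge n-t_s$ of them in its buffer $\buf_i$ by the time $\SMR$ starts, i.e.\ before entering epoch~$1$; in an asynchronous network every honest party eventually has all $\ge n-t_a\ge n-t_s$ of them in $\buf_i$, hence before entering some epoch. Applying strong liveness of $\SMR$ to each transaction $\signed{v_j}{j}$, every $P_i$ that remains honest outputs a block containing it, so $|V_i|$ eventually reaches the number of honest parties $\ge n-t_s$ and $P_i$ outputs the majority value of $V_i$. (Weak BA does not require termination, so producing an output suffices.)

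\textbf{Validity.} Suppose all honest inputs equal~$v$. In the \emph{asynchronous} case, with at most $t_a$ corruptions, when $P_i$ outputs we have $|V_i|\ge n-t_s$; by the observation above at most $t_a$ of its pairs carry $1-v$, so at least $n-t_s-t_a$ carry~$v$. Since $t_a\le t_s$ forces $t_s+2t_a\le t_a+2t_s<n$, we get $n-t_s-t_a>t_a$, so $v$ is the strict majority and $P_i$ outputs~$v$. In the \emph{synchronous} case, with at most $t_s$ corruptions, every honest party has all $\ge n-t_s$ honest signed inputs in $\buf_i$ before entering epoch~$1$, so by strong liveness $\block{i}{1}$ contains every one of them; because blocks are output in order, $\block{i}{1}$ is the first block $P_i$ emits, so the first instant $|V_i|\ge n-t_s$ is (at the latest) when $\block{i}{1}$ is output, and at that instant at least $n-t_s$ pairs of $V_i$ carry~$v$ while at most $t_s$ carry $1-v$. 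As $t_s<n/2$, the majority is~$v$.

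\textbf{Main obstacle.} I expect synchronous validity to be the delicate part. Counting naively, one only learns that when $|V_i|$ first reaches $n-t_s$ at least $n-2t_s$ of the pairs carry~$v$, and $n-2t_s$ need not beat the up-to-$t_s$ adversarial pairs once $n<3t_s$, a regime that $t_a+2t_s<n$ allows (take $t_a$ small and $t_s$ near $n/2$). The resolution is the ordering argument above: strong liveness forces \emph{all} honest signed inputs into the slot-$1$ block, and in-order output makes that the very first block $P_i$ sees, so the honest pairs arrive en masse before any threshold can be crossed. No analogous care is needed asynchronously, because the inequality $t_s+2t_a<n$ --- itself a consequence of $t_a\le t_s$ and $t_a+2t_s<n$ --- is exactly what lets the honest pairs outvote the $t_a$ corrupted ones directly.
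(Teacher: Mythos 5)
Your proof is correct and follows essentially the same route as the paper's: synchronous validity via strong liveness forcing all honest signed inputs into the slot-$1$ block (so the $n-t_s$ honest pairs outvote the at most $t_s$ adversarial ones since $t_s<n/2$), asynchronous validity via the count $n-t_s-t_a>t_a$, and liveness from eventual delivery of the initial signed-input messages combined with liveness of $\SMR$. The only point where you go beyond the paper is in making explicit the in-order-output convention needed to rule out the threshold $|V_i|\ge n-t_s$ being crossed before the slot-$1$ block is output; the paper's proof simply asserts that ``at that point'' the majority value is $v$, leaving that ordering consideration implicit.
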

\begin{proof}
	Assume all honest parties hold input $v$. 
	Consider first the case where at most $t_s$  parties are corrupted and the network is synchronous.
	The initial message from each honest party is received by all other honest parties by time~$\Delta$. By $t_s$-liveness of $\SMR$, the block $B=\block{}{1}$ output by any honest party contains $(v, i)$ for each honest party~$P_i$.
	At that point, each honest party will have $|V| \geq n-t_s$, and since $t_s < n/2$ the majority value in~$V$ will be~$v$.
	Thus, all honest parties output~$v$.
	
	Next, consider the case where there are at most $t_a$ corrupted parties and the network is asynchronous.
	If some honest party has $|V| \geq n-t_s$, then at least $n-t_s-t_a>t_a$ of those values correspond to honest parties, and hence $v$ will be the majority value. Thus, any honest party who outputs anything will output~$v$. 
	It remains to show that all honest parties
	eventually have $|V| \geq n-t_s$. This follows from the fact that honest parties' initial messages are eventually delivered to all honest parties, along with $t_a$-liveness of~$\SMR$.
	\qed \end{proof}

\begin{lemma}[Consistency] 
	For all $t$, if $\SMR$ is $t$-consistent in a synchronous
	(resp., asynchronous) network, then $\WBA$ is $t$-consistent in a synchronous (resp., asynchronous) network. 
\end{lemma}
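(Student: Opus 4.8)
The plan is to reduce $t$-consistency of $\WBA$ directly to $t$-consistency of $\SMR$. The key point is that the only way an honest party in $\WBA$ updates its set $V$ is by processing blocks output by the underlying $\SMR$ instance; the initial exchange of $\signed{v_i}{i}$ messages only feeds transactions into $\SMR$ and never touches $V$ directly. By $t$-consistency of $\SMR$, whenever an honest party outputs a block in slot $k$, every party that remains honest outputs the \emph{same} block $\block{}{k}$. Hence there is a well-defined (partial) block sequence $\block{}{1},\block{}{2},\dots$ on which all honest parties agree, and every honest party's set $V$ is built entirely out of these blocks.

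First I would fix a canonical way of processing the blocks: process slots in increasing order, and within each block add the pairs $(v_j,j)$ in a fixed order on that block's transactions (e.g., lexicographic). Under this convention, after an honest party has processed a given prefix of the ordered transaction stream obtained from $\block{}{1},\block{}{2},\dots$, its current $V$ is a deterministic function of that prefix alone, independent of message-delivery timing, and therefore identical across all honest parties that have processed the same prefix. An honest party produces output at the first point in this stream at which $|V|\ge n-t_s$, outputting the majority value of $V$ at that point. Since the stream and the value of $V$ at each point are the same for all honest parties, they all output at the same point with the same set $V$, and hence the same majority value; this is exactly $t$-consistency. The argument does not reference the network model except through the hypothesis that $\SMR$ is $t$-consistent in that network, so it handles the synchronous and asynchronous cases uniformly.

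The step I expect to be the main obstacle is precisely the claim that $V$ is a timing-independent function of the agreed block sequence. A corrupted $P_j$ can sign two different values and get both $\signed{v}{j}$ and $\signed{v'}{j}$ placed into blocks of different slots; without a canonical processing order, two honest parties could add different pairs for index $j$ (whichever they ``saw first'') and thereby reach different majorities. Pinning down the processing order---and leaning on $t$-consistency of $\SMR$ so that the blocks themselves agree---is what makes the reduction sound. (If one insists on keeping the $\SMR$ protocol exactly as written, one invokes the in-order output variant noted after the SMR definition, which preserves all of $\SMR$'s security properties.)
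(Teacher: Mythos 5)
Your argument is correct and follows the same reduction as the paper, but it does considerably more work: the paper's entire proof of this lemma is the single sentence ``The lemma is immediate.'' The extra work is not wasted, because you have put your finger on a real subtlety that the one-line proof glosses over. Consistency of $\SMR$ only guarantees slot-by-slot agreement on blocks, not agreement on the \emph{order} in which a party outputs blocks across slots (the paper explicitly permits out-of-order output, and the concrete construction of Section~\ref{sec:SMR-main} can in fact produce it in an asynchronous network, since later iterations are started by wall-clock time while earlier ACS executions may still be pending). Since $V$ is updated with a first-come-first-served rule per index $j$, a corrupted $P_j$ who places $\signed{v}{j}$ and $\signed{v'}{j}$ into blocks of different slots could cause two honest parties to record different pairs for~$j$; and even without equivocation, the first moment at which $|V|\geq n-t_s$ holds---and hence the set over which the majority is taken---can depend on the processing order. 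Your canonical processing convention (slots in increasing order, a fixed order within each block) makes $V$ at the output point a deterministic function of the agreed block sequence and closes this gap; it costs nothing, since the paper already observes that enforcing in-order block output preserves all of $\SMR$'s properties. In short: same reduction, but you supply the argument the paper omits, and in doing so you show the lemma is not quite as ``immediate'' as claimed without a convention of this kind.
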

\begin{proof}
	The lemma is immediate.
	\qed \end{proof}

\section{A Block-Agreement Protocol}
\label{sec:bla-proof}
Throughout this section, we assume a synchronous network.

\def\status{{\sf status}}

The structure of our block-agreement protocol is inspired by the 
\emph{synod protocol} of Abraham et al.~\cite{EPRINT:ADDNR17}.
We construct our protocol in a modular fashion.
We begin by defining a subprotocol $\prp$ (see Figure~\ref{prot:propose}) in which a designated party $P^*$ serves as a \emph{proposer}.
A tuple $(k, B, \Sigma, C)$ 
is called a \emph{$k$-vote on $(B, \Sigma)$} if $(B, \Sigma)$ is valid and either:
\begin{itemize}
	\item $k=0$, or
	\item $k>0$ and $C$ is a set of valid signatures from a majority of the parties on messages of the form $(\mathsf{Commit},k', B, \Sigma)$ with $k'\geq k$ (where possibly different $k'$ can be used in different messages).
\end{itemize}
When the exact value of $k$ is unimportant, we simply refer to the tuple as a \emph{vote}.
A message of the form $\status=\signed{\mathsf{Status}, k, B,\Sigma,C}{i}$ is a \emph{correctly formed $\mathsf{Status}$ message (from party $P_i$)} if $(k, B,\Sigma,C)$ is a vote. A message 
$\signed{ \mathsf{Propose},\status_1,\ldots}{*}$
is a \emph{correctly formed $\mathsf{Propose}$ message} if
it contains 
correctly formed $\mathsf{Status}$ messages from a majority of the parties.

\pprotocol{Protocol $\prp$}{A protocol $\prp$ with designated proposer $P^*$.}{prot:propose}{bth}{We describe the protocol from the point of view of a party $P_i$ with input a vote $(k,B,\Sigma,C)$. Let $t=\lceil (n+1)/2 \rceil$.

	\begin{enumerate}
		\item At time $0$, send $\status_i:=\signed{\mathsf{Status}, k, B,\Sigma,C}{i}$ to $P^*$. 	
		\item At time $\Delta$, if $P^*$ has received at least $s \geq t$ correctly formed $\mathsf{Status}$ messages $\status_1,\ldots,
		\status_t$ 
		(from distinct parties), then $P^*$ sets 
		\[m:=(\mathsf{Propose}, \status_1,\ldots,
		\status_s),\] and sends $\signed{m}{*}$ to all parties.
		
		\item At time $2\Delta$, if a correctly formed $\mathsf{Propose}$ message
		$\signed{m}{*}$ has been received from~$P^*$, then send $\signed{m}{*}$ to all parties. Otherwise, output~$\bot$.
		
		\item At time $3\Delta$, let $\signed{m}{*}^j$ be the correctly formed $\mathsf{Propose}$ message received from~$P_j$ (if any).
		If there exists $j$ such that $\signed{m}{*}^j\neq \signed{m}{*}$, output~$\bot$. 
		Otherwise, let $\status_{\sf max} = \signed{\mathsf{Status}, k', B', \Sigma', C'}{}$
		be the status message in $\signed{m}{*}$ with maximal $k'$
		(picking the lowest index in case of ties).
		Output $(B',\Sigma')$.
	\end{enumerate}
}

We first show that any two honest parties who generate output in this protocol agree on their output.

\begin{lemma} \label{lem:prop:cons2} If honest parties $P_i$ and $P_j$ output $(B_i,\Sigma_i), (B_j,\Sigma_j) \neq \perp$, respectively, in an execution of $\prp$, then  $(B_i,\Sigma_i)=(B_j,\Sigma_j)$.
\end{lemma}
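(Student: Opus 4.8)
<br>

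The goal is to prove that in $\prp$, two honest parties who both output non-$\bot$ values must output the same thing. Here is the plan.

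\textbf{Step 1: Pin down the structure of non-$\bot$ outputs.} An honest party $P_i$ outputs a non-$\bot$ value only if it reaches step 4 without aborting, which means: (a) at time $2\Delta$ it received a correctly formed $\mathsf{Propose}$ message $\signed{m}{*}$ from $P^*$ and forwarded it; and (b) at time $3\Delta$, every correctly formed $\mathsf{Propose}$ message it received from any other party was \emph{identical} to the $\signed{m}{*}$ it itself holds. Then its output is determined by $\signed{m}{*}$ alone: it is $(B', \Sigma')$ from the status message inside $\signed{m}{*}$ with maximal $k'$ (lowest index breaking ties), a purely deterministic function of $\signed{m}{*}$.

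\textbf{Step 2: Show $P_i$ and $P_j$ hold the same $\signed{m}{*}$.} Since $P_i$ is honest and reached step 4, at time $2\Delta$ it forwarded its copy $\signed{m}{*}_i$ to all parties, including $P_j$; in a synchronous network this arrives by time $3\Delta$. Likewise $P_j$ forwards $\signed{m}{*}_j$, which reaches $P_i$ by time $3\Delta$. Now $P_i$ did not abort in step 4, so every correctly formed $\mathsf{Propose}$ it received — in particular $\signed{m}{*}_j$ — equals $\signed{m}{*}_i$. (One should note $\signed{m}{*}_j$ is indeed correctly formed, since $P_j$ only forwards correctly formed $\mathsf{Propose}$ messages in step 3, and it carries $P^*$'s signature so it is attributed to the sender correctly; the forwarding party index doesn't matter for "correctly formed".) Hence $\signed{m}{*}_i = \signed{m}{*}_j$.

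\textbf{Step 3: Conclude.} Because the output is a deterministic function of $\signed{m}{*}$ and both parties hold the same $\signed{m}{*}$, they produce the same status message $\status_{\sf max}$ (the tie-breaking by lowest index makes this unambiguous) and therefore the same $(B', \Sigma')$. Thus $(B_i, \Sigma_i) = (B_j, \Sigma_j)$.

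\textbf{Main obstacle.} The proof itself is short; the only subtlety is making sure the cross-forwarding argument in Step 2 is airtight — i.e., that whenever an honest party $P_i$ reaches step 4 it has in fact sent its $\signed{m}{*}$ to everyone at step 3, that this message is "correctly formed" in the precise sense required so that the step-4 equality check at $P_j$ applies to it, and that synchrony guarantees delivery within the $\Delta$ window between steps 3 and 4. I would state these timing/format observations explicitly rather than gloss over them, since the whole lemma hinges on the equality check comparing like with like. No corruption-threshold reasoning is needed here at all; this lemma holds unconditionally on $t$.
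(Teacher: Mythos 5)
Your proof is correct and follows essentially the same route as the paper's: both rest on the observation that an honest party reaching step~4 without aborting has forwarded its $\mathsf{Propose}$ message to everyone, so the step-4 equality check forces any two non-aborting honest parties to hold the identical $\signed{m}{*}$ and hence compute the same deterministic output. You simply spell out the timing and well-formedness details that the paper leaves implicit.
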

\begin{proof} 
	If $P_i$ outputs $(B_i, \Sigma_i) \neq \perp$, then $P_i$ must have received a correctly formed $\mathsf{Propose}$ message $\signed{m}{*}$ by time~$2\Delta$ that would cause it to output $(B_i, \Sigma_i)$. That message is forwarded by $P_i$ to~$P_j$, and hence $P_j$ either outputs $\perp$ (if it detects an inconsistency) or the same value $(B_i, \Sigma_i)$. \qed
\end{proof}

Assume less than half the parties are corrupted. 
We show that if there is some $(B, \Sigma)$ such that the input of each honest party $P_i$ is a vote of the form $(k_i, B, \Sigma, C_i)$, and no honest party ever receives a vote $(k', B', \Sigma', C')$ with $k' \geq \min_i\{k_i\}$ and $(B', \Sigma') \neq (B, \Sigma)$, then the only value an honest party can output is~$(B, \Sigma)$.

\begin{lemma} \label{lem:prop:cons}
	Assume fewer than $n/2$ parties are corrupted,
	and that the input of each honest party $P_i$ to $\prp$ is a $k_i$-vote on $(B, \Sigma)$. If no honest party
	ever receives a $k'$-vote on $(B',\Sigma')\neq (B,\Sigma)$ with $k' \geq \min_i\{k_i\}$, then every honest party outputs either 
	$(B,\Sigma)$ or~$\bot$.
\end{lemma}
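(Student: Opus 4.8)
The plan is to show that an honest party which outputs something other than $\bot$ must output exactly $(B,\Sigma)$; since outputting $\bot$ is explicitly allowed by the statement, and every honest party outputs \emph{something} by time $3\Delta$, this suffices. So suppose an honest party $P_\ell$ outputs a pair $(B',\Sigma')\neq\bot$. By inspection of $\prp$, this can only happen in the final step (at time $3\Delta$), which requires that $P_\ell$ received a correctly formed $\mathsf{Propose}$ message $\signed{m}{*}$ from $P^*$ (by time $2\Delta$), detected no inconsistency, and that $(B',\Sigma')$ is the pair carried by the status message $\status_{\sf max}=\signed{\mathsf{Status},k',B',\Sigma',C'}{}$ of maximal round number $k'$ inside $\signed{m}{*}$.

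First I would use the definition of a correctly formed $\mathsf{Propose}$ message: it contains correctly formed $\mathsf{Status}$ messages from a majority of the parties. Since fewer than $n/2$ parties are corrupted, the honest parties also number more than $n/2$, and two subsets of $[n]$ each of size more than $n/2$ must intersect; hence $\signed{m}{*}$ contains a correctly formed $\mathsf{Status}$ message attributed to some honest party $P_i$. Because signatures are unforgeable and an honest party sends exactly one $\mathsf{Status}$ message---namely the one carrying its input---this message must be $\signed{\mathsf{Status},k_i,B,\Sigma,C_i}{i}$, where $(k_i,B,\Sigma,C_i)$ is $P_i$'s input $k_i$-vote on $(B,\Sigma)$. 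In particular the maximal round number appearing in $\signed{m}{*}$ satisfies $k'\geq k_i\geq\min_i\{k_i\}$.

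Next I would observe that $\status_{\sf max}$, being a correctly formed $\mathsf{Status}$ message, by definition encodes a $k'$-vote on $(B',\Sigma')$. Since $P_\ell$ received $\signed{m}{*}$, it has in particular received this $k'$-vote on $(B',\Sigma')$ with $k'\geq\min_i\{k_i\}$. The hypothesis of the lemma states that no honest party ever receives a $k'$-vote on a pair different from $(B,\Sigma)$ with round number at least $\min_i\{k_i\}$; therefore $(B',\Sigma')=(B,\Sigma)$, as desired.

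The argument is essentially a sequence of definition unfoldings, so there is no serious technical obstacle; the point that deserves the most care is ensuring that the ``majority'' condition in a correctly formed $\mathsf{Propose}$ message genuinely forces an honest $\mathsf{Status}$ message to be present---this is precisely where the $n/2$ bound on corruptions enters---and then verifying that the round number $k'$ of $\status_{\sf max}$ is at least $\min_i\{k_i\}$, so that the lemma's freshness hypothesis applies to the extracted vote. One should also note in passing that the earlier $\bot$-outputs (at time $2\Delta$ or $3\Delta$) are consistent with the claimed conclusion, so those branches require no separate treatment.
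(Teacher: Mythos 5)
Your proof is correct and follows essentially the same route as the paper's: extract an honest $\mathsf{Status}$ message from the correctly formed $\mathsf{Propose}$ message via the majority-intersection argument, note that the maximal round number $k'$ is therefore at least $\min_i\{k_i\}$, and apply the lemma's hypothesis to conclude that $\status_{\sf max}$ must carry $(B,\Sigma)$. Your write-up is somewhat more explicit about why the maximal-$k'$ status message is the one the hypothesis constrains, but the argument is the same.
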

\begin{proof} 
	Consider an honest party $P$ who does not output~$\perp$. That party must have received a correctly formed $\mathsf{Propose}$ message $\signed{m}{*}$ from~$P^*$, 
	which in turn must contain a correctly formed $\mathsf{Status}$ message from at least one honest party~$P_i$. 
	That $\mathsf{Status}$ message contains a vote  $(k_i, B, \Sigma, C_i)$ and, 
	under the assumptions of the lemma, any other vote 
	$(k', B', \Sigma', C')$ contained in
	$\signed{m}{*}$ 
	with $k' \geq k_i$ has $(B', \Sigma')=(B, \Sigma)$. 
	It follows that $P$ outputs~$(B, \Sigma)$.  \qed
\end{proof}

Finally, we show that when $P^*$ is honest then all honest parties do indeed generate output.

\begin{lemma}  \label{lem:prop:honprop}
	Assume fewer than $n/2$ parties are corrupted. If every honest party's input to $\prp$ is a vote and $P^*$ is honest, then every honest party outputs the same valid $(B,\Sigma) \neq \perp$. 
\end{lemma}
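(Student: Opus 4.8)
The plan is to walk through the four timed steps of $\prp$ under the hypothesis that $P^*$ is honest, checking that no honest party ever takes the $\bot$ branch and that all honest parties extract the same value at the end.

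First I would count honest parties: fewer than $n/2$ corruptions means at least $\lceil (n+1)/2\rceil = t$ parties are honest. Since each honest party's input is a vote, at time~$0$ it sends a correctly formed $\mathsf{Status}$ message to $P^*$, and by synchrony all of these reach $P^*$ by time~$\Delta$. Hence $P^*$ has at least $t$ correctly formed $\mathsf{Status}$ messages from distinct parties, assembles $m=(\mathsf{Propose},\status_1,\ldots,\status_s)$ with $s\ge t$, and sends $\signed{m}{*}$ to everyone. Because $t$ is a strict majority of $n$, this $\signed{m}{*}$ is a correctly formed $\mathsf{Propose}$ message. Then I would observe that by time~$2\Delta$ every honest party has received $\signed{m}{*}$ from $P^*$, so no honest party outputs $\bot$ in step~3, and every honest party forwards $\signed{m}{*}$.

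The step I expect to carry the real content is step~4, where a party outputs $\bot$ if it receives from some $P_j$ a correctly formed $\mathsf{Propose}$ message different from $\signed{m}{*}$. Here I would invoke unforgeability of signatures together with the fact that an honest $P^*$ follows the protocol and therefore issues exactly one $\mathsf{Propose}$ message: any correctly formed $\mathsf{Propose}$ message must carry a valid signature of $P^*$, so the only one that can ever exist is $\signed{m}{*}$. Consequently every honest party sees only $\signed{m}{*}$ (forwarded by itself and others) at time~$3\Delta$, and none outputs $\bot$.

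Finally, every honest party reaches the last branch of step~4 and selects $\status_{\sf max}$, the $\mathsf{Status}$ message inside $\signed{m}{*}$ with maximal $k'$ (lowest index on ties). Since $\signed{m}{*}$ is identical for all honest parties and the selection rule is deterministic, all of them extract the same $(B',\Sigma')$ and output it. As $\signed{m}{*}$ is correctly formed, $\status_{\sf max}$ is a correctly formed $\mathsf{Status}$ message and hence encodes a vote on $(B',\Sigma')$, so by definition of a vote $(B',\Sigma')$ is valid; in particular it is not $\bot$. This establishes that all honest parties output the same valid $(B',\Sigma')\neq\bot$.
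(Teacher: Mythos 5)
Your proof is correct and follows essentially the same route as the paper's: the honest $P^*$ collects at least $\lceil (n+1)/2\rceil$ correctly formed $\mathsf{Status}$ messages and issues a single correctly formed $\mathsf{Propose}$ message, whose uniqueness (by unforgeability of $P^*$'s signature) rules out the $\bot$ branches and forces all honest parties to the same deterministic selection of $\status_{\sf max}$. The paper states this more tersely, but the argument is the same; your additional bookkeeping of the timing and of each $\bot$ branch is just a more explicit rendering of it.
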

\begin{proof}  
	Since every honest party's input is a vote, $P^*$ will receive at least $\lceil (n+1)/2 \rceil$ correctly formed $\mathsf{Status}$ messages, and so sends a correctly formed $\mathsf{Propose}$ message to all honest parties.
	Since $P^*$ is honest, this is the only correctly formed $\mathsf{Propose}$ message the honest parties will receive, and so all honest parties 	
	will output the same valid $(B, \Sigma) \neq \perp$.
	\qed \end{proof}


We now present a protocol $\GC$ that uses $\prp$ to achieve a form of graded consensus on a valid pair $(B,\Sigma)$. (See Figure~\ref{prot:gc}.)
As in the protocol of Abraham et al.~\cite{EPRINT:ADDNR17}, we rely on an atomic leader-election mechanism $\leader{}$ with the following properties: On input~$k$ from a majority of parties, $\leader{}$ chooses a uniform leader $\ell \in \{1, \ldots, n\}$ and sends $(k,\ell)$ to all parties. This ensures that if less than half of all parties are corrupted, then at least one honest party must call $\leader{}$ with input~$k$ before the adversary can learn the identity of~$\ell$. 
A leader-election mechanism tolerating any $t < n/2$ faults can be realized (in the synchronous model with a PKI) based on general assumptions~\cite{DBLP:journals/jcss/KatzK09}; it can also be realized more efficiently using a threshold unique signature scheme.

Below, we refer to a message $\signed{\mathsf{Commit},k, B,\Sigma}{i}$ as a \emph{correctly formed $\mathsf{Commit}$ message {\rm (}from $P_i$ on $(B, \Sigma)$}) if 
$(B,\Sigma)$
is valid.
We refer to a message $(\mathsf{Notify},k, B,\Sigma, C)$ as a \emph{correctly formed $\mathsf{Notify}$ message on $(B, \Sigma)$} if $(B,\Sigma)$ is valid
and $C$ is a set of valid signatures on $(\mathsf{Commit},k,B,\Sigma)$ from more than $n/2$  parties; 
in that case, $C$ is called a \emph{$k$-certificate for}  
$(B, \Sigma)$. For an output $((B, \Sigma, C), g)$, we refer to $g$ as the \emph{grade} and $(B, \Sigma, C)$ as the 
\emph{output}. When a party's output is $(B, \Sigma, C)$, we may also say that its output is a $k$-certificate for $(B, \Sigma)$.

\protocol{Protocol $\GC$}{A graded block-consensus protocol $\GC$, parameterized by~$k$.}{prot:gc}{We describe the protocol from the point of view of a party $P_i$ with input a vote 
	$(k',B,\Sigma,C')$. Let $t=\lceil (n+1)/2 \rceil$.
	\begin{enumerate}
		\item At time $0$, run parallel executions of $\Pi_{\mathsf{Propose}}^{P_1},\ldots,\Pi_{\mathsf{Propose}}^{P_n}$, each using input $(k',B,\Sigma,C')$. Let $(B_j,\Sigma_j)$ be the output from the $j$th protocol.
		
		\item At time $3\Delta$, call $\leader{}(k)$ to obtain the response~$\ell$. If $(B_\ell,\Sigma_\ell)\neq\bot$, send $\langle\mathsf{Commit},k, B_{\ell},\Sigma_{\ell}\rangle_i$ to every party. 
		
		\item At time $4\Delta$, if at least $t$ correctly formed $\mathsf{Commit}$ messages $\langle\mathsf{Commit},k, B_{\ell},\Sigma_{\ell}\rangle_j$ from distinct parties have been received, then form a $k$-certificate $C$ for $(B_{\ell},\Sigma_{\ell})$, send $m:=(\mathsf{Notify},k, B_\ell,\Sigma_\ell, C)$ to every party, output $((B_\ell,\Sigma_\ell, C),2)$, and terminate.
		
		\item At time $5\Delta$, if a correctly formed $\mathsf{Notify}$ message $(\mathsf{Notify},k, B,\Sigma, C)$ has been received, output $((B,\Sigma, C),1)$ and terminate. (If there is more than one such message, choose arbitrarily.) Otherwise, output $(\bot,0)$ and terminate.
	\end{enumerate}
}



\begin{lemma} \label{lem:gc:val2}
	Assume fewer than $n/2$ parties are corrupted, and that the input of each honest party $P_i$ to $\GC$ is a $k_i$-vote on $(B, \Sigma)$. If no honest party ever receives a $k'$-vote on $(B',\Sigma')\neq (B,\Sigma)$ with $k' \geq \min_i\{k_i\}$ in step~1 of~$\GC$, then (1)~no honest party sends a $\textsf{Commit}$ message on $(B', \Sigma') \neq (B, \Sigma)$ and (2)~any honest party who outputs a nonzero grade outputs 
	a $k$-certificate for~$(B,\Sigma)$. 
\end{lemma}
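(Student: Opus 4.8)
The plan is to reduce both claims to Lemma~\ref{lem:prop:cons}, applied to the $n$ parallel executions $\Pi^{P_1}_{\mathsf{Propose}},\ldots,\Pi^{P_n}_{\mathsf{Propose}}$ run in step~1 of $\GC$, together with a quorum-intersection argument exploiting that a majority of parties is honest.

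First I would check that the hypotheses of the present lemma are precisely what is needed to invoke Lemma~\ref{lem:prop:cons} on each $\Pi^{P_j}_{\mathsf{Propose}}$: every honest party $P_i$ feeds the same $k_i$-vote on $(B,\Sigma)$ into all $n$ executions, and by assumption no honest party ever receives a $k'$-vote on a pair $(B',\Sigma')\neq(B,\Sigma)$ with $k'\geq\min_i\{k_i\}$ during step~1. Lemma~\ref{lem:prop:cons} then gives that, in \emph{every} execution $\Pi^{P_j}_{\mathsf{Propose}}$, each honest party outputs either $(B,\Sigma)$ or $\bot$; in particular this holds for $j=\ell$, the leader returned by $\leader{}(k)$. (The only care needed is lining up the quantifier on $k'$ with the $\min_i\{k_i\}$ in Lemma~\ref{lem:prop:cons}, uniformly over the $n$ executions.)

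For claim~(1): the only place an honest party $P_i$ sends a $\mathsf{Commit}$ message is step~2 of $\GC$, and then only on its own output $(B_\ell,\Sigma_\ell)$ of $\Pi^{P_\ell}_{\mathsf{Propose}}$, and only when that output is not $\bot$. By the previous paragraph the output is then $(B,\Sigma)$, so no honest party ever sends a $\mathsf{Commit}$ on a pair $(B',\Sigma')\neq(B,\Sigma)$. For claim~(2), I would split on how the nonzero grade arose. Grade~2 (step~3): the party has received $t=\lceil(n+1)/2\rceil$ correctly formed $\mathsf{Commit}$ messages on its own output $(B_\ell,\Sigma_\ell)$ and forms a $k$-certificate for it; since correctly formed $\mathsf{Commit}$ messages are only on valid pairs we have $(B_\ell,\Sigma_\ell)\neq\bot$, hence $(B_\ell,\Sigma_\ell)=(B,\Sigma)$ and the certificate is for $(B,\Sigma)$ (equivalently: $t>n/2$ forces one of those $\mathsf{Commit}$ messages to be honest, and claim~(1) already pins it to $(B,\Sigma)$). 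Grade~1 (step~4): the party receives a correctly formed $\mathsf{Notify}$ carrying a $k$-certificate $C$ for some $(B',\Sigma')$, i.e.\ valid signatures on $(\mathsf{Commit},k,B',\Sigma')$ from more than $n/2$ parties; since fewer than $n/2$ parties are corrupted and signatures are unforgeable (with domain separation tying the message to this $\GC$ instance), at least one of those signers is an honest party who actually sent that $\mathsf{Commit}$, so by claim~(1) $(B',\Sigma')=(B,\Sigma)$ and the output is a $k$-certificate for $(B,\Sigma)$.

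I do not expect a real obstacle here: the statement is essentially a bookkeeping composition of Lemma~\ref{lem:prop:cons} with majority honesty, and the only mildly delicate points are verifying that the ``no honest party receives a bad vote'' hypothesis transfers verbatim to each sub-execution of $\prp$, and ruling out a $\bot$-certificate in the grade-2 case.
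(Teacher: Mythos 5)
Your proposal is correct and follows essentially the same route as the paper's proof: invoke Lemma~\ref{lem:prop:cons} on the parallel $\prp$ executions to get part~(1), then use honest majority to argue that no $k$-certificate (and hence no correctly formed $\mathsf{Notify}$ message) can exist for anything other than $(B,\Sigma)$, covering both the grade-2 and grade-1 cases. The extra care you take about ruling out a $\bot$-certificate and about the quantifier on $k'$ is harmless elaboration of the same argument.
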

\begin{proof} 
	By Lemma~\ref{lem:prop:cons}, every honest party outputs either $(B,\Sigma)$ or $\perp$ in every execution of~$\Pi_{\mathsf{Propose}}$ in step~1. It follows that 
	no honest party $P_i$ sends a $\mathsf{Commit}$ message 
	on $(B',\Sigma')\neq (B,\Sigma)$, proving the first part of the lemma.
	Since less than half the parties are corrupted, this means an honest party will receive fewer than~$\lceil (n+1)/2 \rceil$ correctly formed $\mathsf{Commit}$ 
	messages on anything other than $(B,\Sigma)$; it follows that if an honest party outputs grade $g=2$ then that party outputs $(B, \Sigma, C)$ with
	$C$ a $k$-certificate for $(B,\Sigma)$. 
	
	Arguing similarly, no honest party
	will receive a correctly formed $\mathsf{Notify}$ message on anything other than $(B,\Sigma)$. 
	Hence any honest party that outputs grade~1 outputs 
	$(B, \Sigma, C)$ with
	$C$ a $k$-certificate for $(B,\Sigma)$. \qed
\end{proof}

\begin{lemma}\label{lem:prop:commit2}
	Assume fewer than $n/2$ parties are corrupted. If an honest party outputs $(B,\Sigma,C)$ with a nonzero grade in an execution of~$\GC$, then no honest party sends a $\mathsf{Commit}$ message on $(B',\Sigma')\neq(B,\Sigma)$.
\end{lemma}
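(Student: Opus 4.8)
The statement to prove is Lemma~\ref{lem:prop:commit2}: if an honest party outputs $(B,\Sigma,C)$ with a nonzero grade, then no honest party sends a $\mathsf{Commit}$ message on $(B',\Sigma')\neq(B,\Sigma)$. The plan is to argue that a nonzero-grade output pins down which pair all honest parties committed to, by tracing back through how grades $2$ and $1$ arise in $\GC$. The key observation is that in step~2, every honest party calls $\leader{}(k)$ and obtains the \emph{same} leader index~$\ell$, and only sends a $\mathsf{Commit}$ message on the value $(B_\ell,\Sigma_\ell)$ it received as output from the $\ell$th execution $\Pi_{\mathsf{Propose}}^{P_\ell}$ (or sends nothing, if that output was~$\bot$).

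First I would handle the case where the honest party, call it $P$, outputs grade~$g=2$. By inspection of step~3, this means $P$ received at least $t=\lceil(n+1)/2\rceil$ correctly formed $\mathsf{Commit}$ messages on $(B,\Sigma)$ (so $(B,\Sigma)=(B_\ell,\Sigma_\ell)$ from $P$'s view), and $C$ is assembled from these. Since fewer than $n/2$ parties are corrupted, at least one of those $\mathsf{Commit}$ messages on $(B,\Sigma)$ comes from an honest party~$P_i$; hence $P_i$ must have held $(B_\ell,\Sigma_\ell)=(B,\Sigma)$ as its output from $\Pi_{\mathsf{Propose}}^{P_\ell}$. Now invoke Lemma~\ref{lem:prop:cons2} (consistency of $\prp$): any honest party that outputs a non-$\bot$ value in $\Pi_{\mathsf{Propose}}^{P_\ell}$ outputs the same value $(B,\Sigma)$ as $P_i$ did. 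Therefore no honest party sends a $\mathsf{Commit}$ message on any $(B',\Sigma')\neq(B,\Sigma)$, since such a party would have had to receive $(B',\Sigma')$ as its (non-$\bot$) output from $\Pi_{\mathsf{Propose}}^{P_\ell}$.

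Next I would handle the case $g=1$. Here $P$ received a correctly formed $\mathsf{Notify}$ message $(\mathsf{Notify},k,B,\Sigma,C)$ in step~4, where $C$ is a $k$-certificate for $(B,\Sigma)$ --- i.e., a set of valid $\mathsf{Commit}$ signatures on $(\mathsf{Commit},k,B,\Sigma)$ from more than $n/2$ parties. Again, since fewer than $n/2$ parties are corrupted, at least one of those signatures is from an honest party, so some honest party did send a $\mathsf{Commit}$ message on $(B,\Sigma)$; that honest party's output from $\Pi_{\mathsf{Propose}}^{P_\ell}$ was $(B,\Sigma)$, and the same Lemma~\ref{lem:prop:cons2} argument as above shows all honest $\mathsf{Commit}$ messages are on $(B,\Sigma)$. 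This disposes of both cases.

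\textbf{Main obstacle.} The only real subtlety is making sure the leader index~$\ell$ is common to all honest parties: this follows because $\leader{}$ is an atomic leader-election mechanism that returns a single value $(k,\ell)$ to all parties, so all honest parties that reach step~2 agree on~$\ell$. I would want to state this explicitly. Beyond that, the proof is essentially a two-case unwinding of the protocol combined with a single application of the already-established $\prp$-consistency lemma and an honest-majority counting argument; I do not anticipate a deeper difficulty.
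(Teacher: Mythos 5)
Your proof is correct and follows essentially the same route as the paper's: trace the $k$-certificate (from more than $n/2$ signers) back to at least one honest $\mathsf{Commit}$ on $(B,\Sigma)$, conclude that honest party's output from $\Pi_{\mathsf{Propose}}^{P_\ell}$ was $(B,\Sigma)$, and apply Lemma~\ref{lem:prop:cons2}. Your explicit two-case split (grade~2 versus grade~1) and the remark that all honest parties share the same leader index~$\ell$ are slightly more careful than the paper's one-paragraph version, but the underlying argument is identical.
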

\begin{proof}
	Say an honest party outputs $(B,\Sigma,C)$ with a nonzero grade. That party must have received a correctly formed $\mathsf{Notify}$ message on $(B, \Sigma)$. Since that $\mathsf{Notify}$ message includes a $k$-certificate $C$ with signatures from more than half the parties, at least one honest party $P$ must have sent a $\mathsf{Commit}$ message on $(B,\Sigma)$.
	This means that $P$ must have received $(B,\Sigma)$ as its output from $\Pi_{\mathsf{Propose}}^{P_\ell}$.
	By Lemma~\ref{lem:prop:cons2}, this means the output of any other honest party from $\Pi_{\mathsf{Propose}}^{P_\ell}$ is either $(B,\Sigma)$ or~$\perp$.
	\qed
\end{proof}

\begin{lemma} \label{lem:gc:cons2}
	Assume fewer than $n/2$ parties are corrupted. If an honest party outputs $(B,\Sigma,C)$ with grade~2 in an execution of~$\GC$, 
	then every honest party outputs a $k$-certificate on $(B, \Sigma)$ with a nonzero grade.
\end{lemma}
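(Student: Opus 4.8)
The plan is to do a short case analysis on an arbitrary honest party $Q$, using Lemma~\ref{lem:prop:commit2} as the crucial ingredient that pins down all honest behavior. First I would record two facts. (i) Since the honest party $P$ outputs $(B,\Sigma,C)$ with grade~$2$, Lemma~\ref{lem:prop:commit2} implies that no honest party sends a $\mathsf{Commit}$ message on any pair $(B',\Sigma')\neq(B,\Sigma)$. (ii) To produce grade~$2$ in step~$3$, party $P$ at time $4\Delta$ sent a correctly formed $\mathsf{Notify}$ message $m=(\mathsf{Notify},k,B,\Sigma,C)$ to all parties (its $k$-certificate $C$ being formed from the $t=\lceil(n+1)/2\rceil$ correctly formed $\mathsf{Commit}$ messages it received, which in particular makes $(B,\Sigma)$ valid); by synchrony, every honest party receives $m$ by time $5\Delta$.

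Now fix an honest party $Q$. If $Q$ terminates in step~$3$ with grade~$2$, then $Q$ received at least $t$ correctly formed $\mathsf{Commit}$ messages on a common pair $(B_\ell,\Sigma_\ell)$, where $\ell$ is the (common) leader output by $\leader{}(k)$. Since fewer than $n/2 < t$ parties are corrupted, at least one of those $\mathsf{Commit}$ messages is from an honest party, and by fact (i) that $\mathsf{Commit}$ is on $(B,\Sigma)$; hence $(B_\ell,\Sigma_\ell)=(B,\Sigma)$ and $Q$'s output is a $k$-certificate for $(B,\Sigma)$ with grade~$2$. Otherwise $Q$ reaches step~$4$ at time $5\Delta$; by fact (ii) it has received the correctly formed $\mathsf{Notify}$ message $m$ on $(B,\Sigma)$, so $Q$ outputs some correctly formed $\mathsf{Notify}$ message with grade~$1$. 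It remains to show every correctly formed $\mathsf{Notify}$ message $Q$ might pick is on $(B,\Sigma)$: such a message carries a $k$-certificate, i.e.\ valid signatures on $(\mathsf{Commit},k,\cdot,\cdot)$ from more than $n/2$ parties, hence from at least one honest party, who by fact (i) only signed a $\mathsf{Commit}$ on $(B,\Sigma)$. Thus $Q$ outputs a $k$-certificate for $(B,\Sigma)$ with a nonzero grade in both cases.

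I do not expect a serious obstacle: the substantive work has already been done in Lemma~\ref{lem:prop:commit2} (itself relying on Lemma~\ref{lem:prop:cons2}), so this proof is mostly bookkeeping. The two points needing care are the timing argument — that a $\mathsf{Notify}$ sent at time $4\Delta$ is delivered before step~$4$ is evaluated at time $5\Delta$, which uses the standing synchrony assumption — and stating once and for all the pigeonhole fact that any set of more than $n/2$ (resp.\ at least $t=\lceil(n+1)/2\rceil$) signatures/messages contains a contribution from an honest party when fewer than $n/2$ parties are corrupted, so that it can be reused for both the $\mathsf{Commit}$ and the $k$-certificate counting steps.
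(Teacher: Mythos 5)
Your proposal is correct and follows essentially the same route as the paper: invoke Lemma~\ref{lem:prop:commit2} to rule out honest $\mathsf{Commit}$ messages on any $(B',\Sigma')\neq(B,\Sigma)$, conclude that no honest party can obtain a nonzero grade on a different pair (since both grade~2 and the $k$-certificate inside a $\mathsf{Notify}$ require a contribution from at least one honest party), and use $P$'s $\mathsf{Notify}$ message, delivered by time $5\Delta$, to guarantee every honest party gets a nonzero grade. Your write-up merely makes explicit the counting and timing details that the paper leaves implicit.
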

\begin{proof} Say an honest party $P$ outputs $(B,\Sigma,C)$ with a grade of~2. 
	By Lemma~\ref{lem:prop:commit2}, this means no honest party sent a
	correctly formed $\mathsf{Commit}$ message on $(B',\Sigma')\neq(B,\Sigma)$; it is thus impossible
	for any honest party to output $(B',\Sigma')\neq(B,\Sigma)$ with a nonzero grade. Since $P$ sends a correctly formed $\mathsf{Notify}$ message on $(B, \Sigma)$ to all honest parties, every honest party will output $(B, \Sigma)$ with a nonzero grade. \qed
\end{proof}

\begin{lemma}\label{lem:gc:live}  
	Assume fewer than $n/2$ parties are corrupted. 	
	Then
	with probability at least~$1/2$ every honest party outputs a $k$-certificate on the same valid $(B, \Sigma)$ with a grade of~2.
\end{lemma}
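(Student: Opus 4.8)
The plan is to exploit the fact that the leader $\ell$ returned by $\leader{}(k)$ is chosen uniformly and is unpredictable to the adversary until after at least one honest party queries $\leader{}(k)$. First I would observe that every honest party's input to $\GC$ is a vote, so by Lemma~\ref{lem:prop:honprop} the subprotocol $\prp^{P_j}$ produces the same valid output $(B_j,\Sigma_j)\neq\bot$ for all honest parties whenever $P_j$ is honest. Since fewer than $n/2$ parties are corrupted, there are more than $n/2$ indices $j$ for which $P_j$ is honest, and hence more than $n/2$ indices for which all honest parties hold a common valid $\prp$-output. The key point is that the adversary cannot bias the choice of $\ell$ towards a corrupt index: by the stated property of $\leader{}$, some honest party must call $\leader{}(k)$ before the adversary learns $\ell$, and this happens only after all honest parties have fixed their step-1 outputs (which occur by time $3\Delta$). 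Therefore, conditioned on the protocol reaching step~2, the leader $\ell$ is uniform over $[n]$ independently of the adversary's behaviour so far, so $\Pr[P_\ell \text{ is honest}] > 1/2$.

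Next I would show that \emph{if} $\ell$ is such that all honest parties output the same valid $(B_\ell,\Sigma_\ell)\neq\bot$ from $\prp^{P_\ell}$, then every honest party outputs a $k$-certificate on $(B_\ell,\Sigma_\ell)$ with grade~2. In that case each honest party sends $\signed{\mathsf{Commit},k,B_\ell,\Sigma_\ell}{i}$ to everyone at time $3\Delta$; since there are more than $n/2 \geq t$... (more precisely, at least $n - (\text{number corrupted}) > n/2$, hence $\geq t = \lceil (n+1)/2\rceil$) honest parties, every honest party receives at least $t$ correctly formed $\mathsf{Commit}$ messages on $(B_\ell,\Sigma_\ell)$ by time $4\Delta$. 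Each then forms a $k$-certificate $C$, sends a $\mathsf{Notify}$ message, and outputs $((B_\ell,\Sigma_\ell,C),2)$ and terminates. Combining the two parts: with probability $>1/2$ the leader index $\ell$ corresponds to an honest party, and in that event all honest parties output a $k$-certificate on the same valid $(B_\ell,\Sigma_\ell)$ with grade~$2$.

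The main obstacle I expect is arguing carefully that the adversary cannot correlate the leader with the set of "good" indices. This requires being precise about timing: all honest parties complete step~1 (and thus fix which indices $j$ yield a common $\prp$-output) by time $3\Delta$, which is exactly when they call $\leader{}(k)$; since at least one honest call to $\leader{}(k)$ precedes the adversary learning $\ell$, and $\ell$ is then uniform, the event "$P_\ell$ honest" has probability $> 1/2$ regardless of the adversary's strategy. A minor subtlety is that "some honest party outputs grade~2 $\Rightarrow$ all honest parties output a nonzero grade" is already handled by Lemma~\ref{lem:gc:cons2}, but here we need the stronger conclusion that \emph{all} honest parties reach grade~$2$ (not merely a nonzero grade), which is why the argument must go through the $\mathsf{Commit}$-count threshold directly rather than invoking that lemma. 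Once these timing and counting points are nailed down, the probability bound $\geq 1/2$ follows immediately.
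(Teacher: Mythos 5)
Your proof is correct and follows essentially the same route as the paper's: condition on the event (probability at least $1/2$) that the leader chosen by $\leader{}(k)$ was honest during step~1, invoke Lemma~\ref{lem:prop:honprop} to conclude all honest parties hold the same valid $(B_\ell,\Sigma_\ell)\neq\bot$, and then count the at least $\lceil (n+1)/2\rceil$ honest $\mathsf{Commit}$ messages to get grade~2 for everyone. Your explicit remarks on why the adversary cannot bias $\ell$ toward corrupted indices, and on why Lemma~\ref{lem:gc:cons2} alone would only yield a nonzero grade rather than grade~2, are correct elaborations of points the paper leaves implicit.
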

\begin{proof} 
	The leader $\ell$ chosen in step~2 was honest in step~1 with probability at least~$1/2$. We show that whenever this occurs, every honest party outputs grade~2. Agreement on a valid $(B, \Sigma)$ follows from Lemma~\ref{lem:gc:cons2}.
	
	Assume $\ell$ was honest in step~1.
	Lemma~\ref{lem:prop:honprop} implies that every honest party holds the same valid $(B_\ell, \Sigma_\ell) \neq \perp$ in step~2, and so sends a correctly formed $\mathsf{Commit}$ message on $(B_\ell, \Sigma_\ell)$. Since there are at least $\lceil (n+1)/2 \rceil$ honest parties, the lemma follows. \qed
\end{proof}

In Figure~\ref{prot:bla} we describe our block-agreement protocol~$\BLA$. 

\protocol{Protocol $\BLA$}{A block-agreement protocol $\BLA$.}{prot:bla}{We describe the protocol from the point of view of a party $P$ with input a valid pair~$(B,\Sigma)$.  
	
	\smallskip
	Initialize $(k^*, B^*,\Sigma^*, C^*):=(0, B,\Sigma, \emptyset)$ and $k:=1$. While $k\leq\kappa$ do:
	\begin{enumerate}
		\item At time $(5k-5)\cdot\Delta$, run $\GC$ using input 
		$(k^*, B^*,\Sigma^*, C^*)$ to obtain output 
		$((B,\Sigma, C),g)$.
		\item At time $5k\cdot\Delta$ do:  If $g>0$, set $(k^*, B^*,\Sigma^*, C^*):=(k,B, \Sigma, C)$. If $g=2$, output $(B, \Sigma)$. Increment~$k$. 
	\end{enumerate}
}

\begin{lemma} If $t < n/2$, then $\BLA$ is $t$-secure. 
\end{lemma}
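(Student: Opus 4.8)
The plan is to verify the three block-agreement properties of $\BLA$ for $t<n/2$ — $t$-validity, $t$-termination, and $t$-consistency — building everything on the lemmas already proved for $\GC$ and $\prp$ and on a careful analysis of how the carried vote $(k^*,B^*,\Sigma^*,C^*)$ evolves across the $\kappa$ iterations. Termination is immediate: each invocation of $\GC$ (and the nested $\prp$) runs for a bounded number of rounds, at most $5\Delta$, and the while loop executes at most $\kappa$ iterations, so every honest party halts by time $5\kappa\cdot\Delta$ (a party that has produced an output simply keeps participating in the remaining iterations). First I would argue that honest parties actually produce output, which is needed for validity and consistency to be meaningful: by Lemma~\ref{lem:gc:live}, in each iteration the event ``every honest party obtains grade~$2$'' occurs with probability at least $1/2$, and since each iteration's leader election is a fresh uniform choice and fewer than $n/2$ parties are ever corrupted, conditioned on any history the leader of that iteration is honest with probability more than $1/2$; hence except with probability $2^{-\kappa}$ some iteration $j\le\kappa$ has an honest leader, at which point every honest party outputs in step~2 of $\BLA$.

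The heart of the proof is a \emph{lock-in} lemma: if some honest party ever outputs a pair $(B^\dagger,\Sigma^\dagger)$ in $\BLA$ — which can only happen by obtaining grade~$2$ from $\GC$ in some first such iteration $k^\dagger$ — then for every $j\ge k^\dagger$ the carried vote of each honest party entering iteration $j{+}1$ is a vote on $(B^\dagger,\Sigma^\dagger)$ with index at least $k^\dagger$, no honest party ever sends a $\mathsf{Commit}$ message with index $\ge k^\dagger$ on any pair other than $(B^\dagger,\Sigma^\dagger)$, and no honest party ever receives, in step~1 of a $\GC$ execution of iteration $\ge k^\dagger{+}1$, a $k'$-vote on a different pair with $k'\ge k^\dagger$. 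The base case packages Lemma~\ref{lem:gc:cons2} (all honest parties leave iteration $k^\dagger$ holding a $k^\dagger$-certificate for $(B^\dagger,\Sigma^\dagger)$) with Lemma~\ref{lem:prop:commit2} (no honest $\mathsf{Commit}$ on a different pair at index $k^\dagger$). The inductive step uses Lemma~\ref{lem:gc:val2}: once all honest inputs to the iteration-$j$ $\GC$ are votes on $(B^\dagger,\Sigma^\dagger)$ and the ``no bad high-index vote'' invariant holds, part~(1) gives the $\mathsf{Commit}$ statement for iteration $j$ and part~(2) shows any honest party with nonzero grade again carries a vote on $(B^\dagger,\Sigma^\dagger)$; the ``no bad high-index vote'' invariant then persists because a $k'$-vote with $k'\ge k^\dagger$ on a different pair would need a majority of $\mathsf{Commit}$ messages with index $\ge k^\dagger$ on that pair, hence an honest one, which has been excluded for all indices $k^\dagger,\dots,j$ (the timing of $\GC$ — $\mathsf{Commit}$ messages go out only after step~1 of their iteration — makes the index-$j$ exclusion suffice for iteration $j{+}1$). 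Given lock-in, consistency's agreement clause follows: $k^\dagger$ exists except with probability $2^{-\kappa}$ by the previous paragraph, and by Lemma~\ref{lem:gc:val2}(2) plus the invariant every later nonzero grade is a $k$-certificate for $(B^\dagger,\Sigma^\dagger)$, so every honest party that outputs, outputs exactly $(B^\dagger,\Sigma^\dagger)$; moreover every honest party does output by the first honest-leader iteration $\ge k^\dagger$ — combining Lemma~\ref{lem:prop:honprop} with Lemma~\ref{lem:prop:cons} shows that leader's $\prp$-output is $(B^\dagger,\Sigma^\dagger)$ itself, so all honest parties $\mathsf{Commit}$ on it and reach grade~$2$. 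Validity is then a byproduct: a grade-$2$ output requires $\lceil(n+1)/2\rceil$ correctly formed $\mathsf{Commit}$ messages on the output pair, and such messages are only sent on valid pairs.

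What remains, and what I expect to be the hard part, is showing that the common output $(B^\dagger,\Sigma^\dagger)$ is $s$-valid whenever every honest party's input to $\BLA$ is $s$-valid. The approach is an induction over iterations proving that every pair ever carried as (part of) an honest party's vote is $s$-valid: the base case holds by hypothesis, and in the inductive step an honest party adopts a new pair $(B',\Sigma')$ only together with a $j$-certificate $C'$, which forces some honest party to have sent $\mathsf{Commit}$ on $(B',\Sigma')$ in iteration $j$, hence to have obtained $(B',\Sigma')\ne\bot$ from some $\prp^{P_\ell}$ in that iteration; that output is the $(B',\Sigma')$ of the status message in $P_\ell$'s $\mathsf{Propose}$ message with maximal index $k'$, and one reduces — when $k'>0$, via the induction hypothesis applied to the earlier iteration at which the certificate inside that status message was formed; when $k'=0$, by tracing $(B',\Sigma')$ back to a vote appearing in step~1 of a $\GC$ execution whose honest inputs were all $s$-valid. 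Making this reduction fully rigorous — pinning down the sense in which a certificate at index $k'$ ``came from'' an earlier iteration, and handling adversarially chosen leaders and the iteration-$1$ case where no certificates yet exist — is the delicate point; everything else is bookkeeping over the $\GC$ and $\prp$ lemmas.
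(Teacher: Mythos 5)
Your core argument---termination after $\kappa$ iterations of bounded length, the lock-in invariant established via Lemmas~\ref{lem:gc:cons2} and~\ref{lem:prop:commit2} and propagated inductively via Lemma~\ref{lem:gc:val2}, and output generation from Lemma~\ref{lem:gc:live} with failure probability $2^{-\kappa}$---is exactly the paper's proof, down to the choice of lemmas and the structure of the induction. For that portion there is nothing to add.

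The one place you go beyond the paper is the final paragraph, where you correctly observe that $t$-consistency as defined also requires the common output to be $s$-valid whenever every honest input is $s$-valid, note that this does not follow from the lemmas proved about $\GC$ and $\prp$, and sketch a trace-back induction without completing it. Two remarks. First, the paper's own proof does not address this clause either: it establishes only agreement on a ($0$-)valid pair, even though $s$-validity of the output is precisely what the SMR consistency argument (Theorem~\ref{smr_consistency}) invokes. So you have identified a real proof obligation that the written proof skips. Second, your sketched induction hits a concrete obstacle exactly at the $k'=0$ base case you flag as delicate: a corrupted party may submit a correctly formed $\mathsf{Status}$ message carrying a $0$-vote on a pair that is valid but not $s$-valid (a $0$-vote only requires $0$-validity), and a corrupted proposer in iteration~$1$ can include that message in its $\mathsf{Propose}$ message; since every honest $\mathsf{Status}$ message in iteration~$1$ also carries $k'=0$, the lowest-index tie-break in step~4 of $\prp$ can select the adversarial pair, after which all honest parties send $\mathsf{Commit}$ on it and it becomes the locked-in grade-$2$ output. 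The trace-back therefore terminates at an adversarial $0$-vote rather than at an honest input, and the induction does not close. Repairing this seems to require tightening the protocol or the notion of a correctly formed vote (e.g., having honest parties accept only $t_s$-valid pairs inside $\mathsf{Status}$, $\mathsf{Commit}$, and $\mathsf{Notify}$ messages, which is locally checkable since $t_s$ is public), not merely more careful bookkeeping over the existing lemmas.
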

\label{lem:bla:cons}
\begin{proof} Assume fewer than $n/2$ parties are corrupted.
	Let $k$ be the first iteration in which some honest party 
	outputs $(B, \Sigma)$.
	We first show that in every subsequent iteration: (1)~every honest party $P_i$ uses as its input in step~1 a $k_i$-vote on $(B, \Sigma)$; and (2)~corrupted parties cannot construct a $k'$-vote on $(B', \Sigma') \neq (B, \Sigma)$ for any $k' \geq \min_i\{k_i\}$.
	
	Say an honest party outputs $(B, \Sigma)$ in iteration~$k$. Then that party must have output a $k$-certificate for $(B, \Sigma)$ in the execution of $\GC$ in iteration~$k$. 
	By Lemma~\ref{lem:gc:cons2}, this means every honest party output a $k$-certificate on $(B, \Sigma)$ 
	in the same execution of~$\GC$, and so~(1) holds in iteration~$k+1$. Moreover, 
	Lemma~\ref{lem:prop:commit2} implies that no honest party sent a $\mathsf{Commit}$ message on $(B', \Sigma') \neq (B, \Sigma)$ in the execution of~$\GC$, and so~(2) also holds in iteration~\mbox{$k+1$}. Lemma~\ref{lem:gc:val2} implies, inductively, that the stated properties continue to 
	hold in every subsequent iteration. 
	
	It follows from Lemma~\ref{lem:gc:val2} that any other honest party $P$ who generates output in $\BLA$ also outputs~$(B, \Sigma)$, regardless of whether they generate output in iteration~$k$ or a subsequent iteration.
	
	Lemma~\ref{lem:gc:live} shows that in each iteration of~$\BLA$, with probability at least $1/2$ all honest parties output some (the same) valid~$(B, \Sigma)$ in that iteration. Thus, after $\kappa$ iterations all honest parties have generated valid output with probability at least $1-2^{-\kappa}$ (note that all parties terminate after $\kappa$ iterations). \qed
\end{proof}

\end{document}